\documentclass{myElsart}

\usepackage[figuresright]{rotating}
\usepackage[colorlinks=true,citecolor=blue,linkcolor=blue]{hyperref}
\usepackage{amsmath}
\usepackage{amssymb}
\usepackage[bb=boondox]{mathalfa} 
\usepackage{bbm}
\usepackage{mathtools}
\usepackage{booktabs}
\usepackage{theorem}
\usepackage{algpseudocode}
\usepackage{algorithm}
\usepackage[]{natbib}
\usepackage{tikz}
\usetikzlibrary{arrows}
\usepackage{enumitem}

\newtheorem{theorem}{Theorem}[section]
\newtheorem{lemma}[theorem]{Lemma}

\newenvironment{proof}[1][Proof]{\vspace{1em}\begin{trivlist}
\item[\hskip \labelsep {\bfseries #1}]}{\hfill $\Box$\end{trivlist}\vspace{1em}}

\begin{document}

\begin{frontmatter}

\title{Blume-Capel model: Estimation of a three stable state network for $-\bf 1$, $\bf 0$ and $\bf +1$ data}
\author{}\\
\author{Lourens Waldorp$^{1,\star}$ ({\tt waldorp@uva.nl})} 
\author{Jonas Dalege$^1$ ({\tt j.dalege@uva.nl})}
\author{Maarten Marsman$^1$ ({\tt m.marsman@uva.nl})}
\author{Adam Finnemann$^1$ ({\tt a.t.k.finnemann@uva.nl})}
\author{Irene Ferri$^2$ ({\tt irene.ferry@ub.edu})}
\author{Han L. J. van der Maas$^1$ ({\tt h.l.j.vdmaas@uva.nl})}
\vspace{2em}

\author{$^1$ Department of Psychological Methods, University of Amsterdam, Netherlands}
\author{$^2$ Northeastern University - Portland Campus, Portland, United States}
\vspace{2em}





\begin{abstract}
An extension of the Ising model is proposed as a viable alternative for data with values $-1$, $0$ and $+1$ in the inverse problem, i.e., estimation of the parameters. This model is called the Blume-Capel (BC) model, adapted from physics for small networks. The advantage of the BC model is not only the fact that it is possible to have a neutral (centrist) position on the response scale, but also that this model allows for three stable states. We illustrate magnetisation properties of the BC model using simulations and mean field results. For estimation of the BC parameters, we show that the BC model is part of the exponential family of distributions and show that the model is identified, except for the (inverse) temperature. We then show that combining pseudo-likelihood with lasso yields accurate parameter recovery for the BC model, even in small networks. Moreover, confidence intervals with good coverage properties can be obtained using the desparsified lasso together with sandwich and shrinkage techniques. We apply the methods to data obtained from the online platform \textit{Stemwijzer}, intended to aid people in deciding for whom to vote. 
\end{abstract}
\begin{keyword}
spin models, pseudo-likelihood estimation, desparsified lasso, shrinkage estimation, inverse problem, network analysis
\end{keyword}

\end{frontmatter}

\endNoHyper

\section{Introduction}\noindent
The Blume-Capel (BC) model \citep{Blume:1966,Capel:1966} is an extension of the Ising model where, next to the values $-1$ and $+1$, the value 0 is introduced to the possible states of each random variable (node) in the network. The reason that the value 0 is introduced is to emphasise a central or centrist position. For example, in sociophysics $-1$ is identified with leftist, $+1$ with rightist and $0$ with centrist positions on a political scale \citep{Fernandez:2016, Ferri:2022}. Hence, introducing the value 0 into the model has significant meaning. Such meaning can also be found in psychology. In psychology the response 0 could refer to ``don't know'' or a central (neutral) position with respect to a statement like ``Eating meat is bad for the environment.''  \citep{Maas:2026}.
Here we introduce some properties of the BC model in terms of probabilities and dynamics and obtain a pseudo-likelihood procedure (using the lasso) to obtain estimates and confidence intervals of the parameters of the BC-model, i.e., the inverse problem \citep[see, e.g.,][for the Ising model]{Albert:2014,Epskamp:2015aa,Marsman:2018}.

The BC model shows richer and different behaviours than the Ising model. First, in terms of stationary (equilibrium) distributions, the BC model has three possible minima (at low temperature); in the phase space of the BC model there are parameter settings such that three different states are possible \citep{Kaufman:1990}. This is a significant contribution to the existing possibilities of the Ising model, which only allows two stable states. 
The third stable position, neutrality, signifies that when the temperature is low enough and opinions tend to align, three different groups in political views or attitudes toward the environment, for example, could be obtained. 
Second, the BC model is the simplest model that allows a sudden change (first-order transition) without an external field. In the BC model, increasing the parameter that controls the prevalence of neutral states ($0$s) induces a first-order phase transition: the system abruptly shifts from an Ising-like regime with states $-1$ and $+1$ to one where the $0$ state dominates (see Section \ref{sec:blume-capel-model}). This transition occurs without the need for an external field \citep{Fernandez:2016}, in contrast to the Ising model \citep{Bouabci:2000}. And third, in the regime where a first-order transition occurs, the BC model exhibits hysteresis: the transition between states takes place at different values of the control parameter depending on whether it is increased or decreased \citep{Akinci:2016}. These properties of the BC model make it an attractive alternative to the Ising model.

The Blume-Capel model was introduced as a model for the magnetic material uranium dioxide, independently by  \cite{Blume:1966} and \citet[][Hans Capel, from both Leiden University and the University of Amsterdam]{Capel:1966}. Several years later an extension of the BC model was applied to mixed forms of helium  \citep[Blume-Emery-Griffiths model][]{Blume:1971}. But the BC model turned out to be a simpler model capturing many interesting effects and allowing for more rigorous study (like the Ising model) than other extensions \citep[see, e.g., ][]{Bouabci:2000,Graham:2006}.

Several alternatives to the Ising model also allow for multiple states. For instance, the Potts model \citep{Plischke:1994,Grimmett:2022} with three states could be used, but changes energy only if the states are equal. Also, a multinomial logistic graphical model is an alternative \citep{Yang:2012,Suggala:2017}, but does not have the neutrality parameter, which we show in Section \ref{sec:application-to-data} has a clear and relevant connection to the proportion of zeros.

We first introduce the BC model in Section \ref{sec:blume-capel-model}, describing equilibrium and dynamic (mean field) properties. We also characterise the BC model in terms of the exponential family distributions, showing that the model is identified except for one parameter (inverse temperature). Then, in Section \ref{sec:pseudo-likelihood} we introduce the pseudo-likelihood method for the BC model. This is a necessary step, as the normalisation constant of the BC model, like with the Ising model, is computationally infeasible. We also introduce the lasso estimation and show that standard errors can be obtained for the pseudo-likelihood (and lasso) by applying the sandwich estimate. We prove (in Appendix \ref{app:dLshrink}) that our proposed method leads to accurate confidence intervals given the appropriate conditions. We give a numerical illustration in Section \ref{sec:numerical-illustrations}, where we provide some evidence that the estimation is quite reasonable. Then, finally, we apply the model to data on voting behaviour in Section \ref{sec:application-to-data}.

\section{Blume-Capel model}\label{sec:blume-capel-model}\noindent
We discuss the Blume-Capel (BC) model that allows for each variable in the network to obtain the three different values $-1$, $0$ and $+1$. As reasoned above, it is important for the applications we have in mind (e.g., attitudes, or depression) that the values are symmetric about 0, so that 0 takes the central position in the set of options. 

We assume a finite graph $\mathcal{G}$ that consists of nodes in $V$ and edges in $E$. We write nodes as $1,2,\ldots, m$ and edges as $(s,t)\in E$. The graph has an arbitrary topology fixed in $E$ and one of the aims is to determine the set of edges in $E$. Associated with each node is a random variable $X_s$ for $s\in V$; we refer to the vector of all random variables in the graph as $X=(X_1,X_2,\ldots,X_m)$ and their observed values $x=(x_1,x_2,\ldots,x_m)$. The joint distribution of all random variables that describes the graph $\mathcal{G}$ in terms of the edges is called a graphical model \citep{Maathuis:2018}. The joint distribution is here given by the BC model. 

The BC model is characterised by the Hamiltonian. Similar to the Ising model \citep{Waldorp:2019} the BC model has threshold parameters $\tau_s$ for all variables $X_s$, interaction (connection) parameters $\sigma_{st}$ for all edges $(s,t)$ in $E$. The Hamiltonian of the BC model is
\begin{align}\label{eq:hamiltonian-bc}
\mathcal{H}(x) =  -\sum_{s\in V} \tau_s x_s - \sum_{(s,t)\in E} \sigma_{st} x_s x_t + \alpha^2 \sum_{s\in V} x_s^2 
\end{align}
This Hamiltonian differs from the Ising model in that there is an additional term $\alpha^2 \sum_{s\in V} x_s^2$. This term co-determines the probability of $x_s$ being zero or non-zero. The parameter $\alpha^2$ "punishes" for the number of non-zero values in $x$; high values of $\alpha^2$ lead to more 0s in $x$ when generating values (see Section \ref{sec:equilibrium-dynamics}).

The corresponding distribution of the BC model can be used as a representation of an equilibrium distribution, at the end of some process involving change (e.g., learning or forming an opinion). As with the Ising model, when a system with interactions governed by the BC model is placed in thermal equilibrium at temperature $T$, the configuration probabilities follow a Boltzmann distribution with inverse temperature $\beta=1/T$. We collect all parameters $\tau_s$, $\sigma_{st}$ and $\alpha^2$ in the vector $\theta$ of dimension $m(m+1)/2+1$. Then the distribution of the BC model is
\begin{align}\label{eq:joint-distribution}
p_\theta(x) = \frac{1}{Z_\theta}\exp\left( \beta\sum_{s\in V} \tau_s x_s +\beta \sum_{(s,t)\in E} \sigma_{st} x_s x_t - \beta\alpha^2 \sum_{s\in V} x_s^2 \right)
\end{align}
where the normalising constant (partition function) is 
\begin{align}\label{eq:normalising-constant}
Z_\theta = \sum_{x\in \{-1,0,+1\}^V} \exp\left( \beta\sum_{s\in V} \tau_s x_s + \beta\sum_{(s,t)\in E} \sigma_{st} x_s x_t - \beta\alpha^2 \sum_{s\in V} x_s^2\right)
\end{align}

To characterise the BC model we illustrate by simulation and the mean field approximation some equilibrium and dynamical properties. 

\subsection{Equilibrium and dynamics characterisation}\label{sec:equilibrium-dynamics}\noindent
Using a Gibbs sampler for the BC model (see Appendix \ref{app:gibbs-sampler}) we obtain the probabilities of $-1$, $0$, and $+1$ values. In Figure \ref{fig:bc-alpha}(a) the proportion of values $-1$, $0$ and $+1$ (magnetisation) for different values of $\alpha^2$ is uniform because in (\ref{eq:joint-distribution}) the Boltzmann distribution is fat when $\beta=0$. However, when $\beta=2$ thermal agitation is suppressed and the system tends to settle into energetically favourable configurations, we see that the parameter $\alpha^2$ controls the proportion of 0s (see Figure \ref{fig:bc-alpha}(b)). 
%
\begin{figure}\centering
\begin{tabular}{@{\hspace{-1em}} c @{\hspace{2em}} c @{\hspace{1em}} c}
		&\pgfimage[width=.45\textwidth]{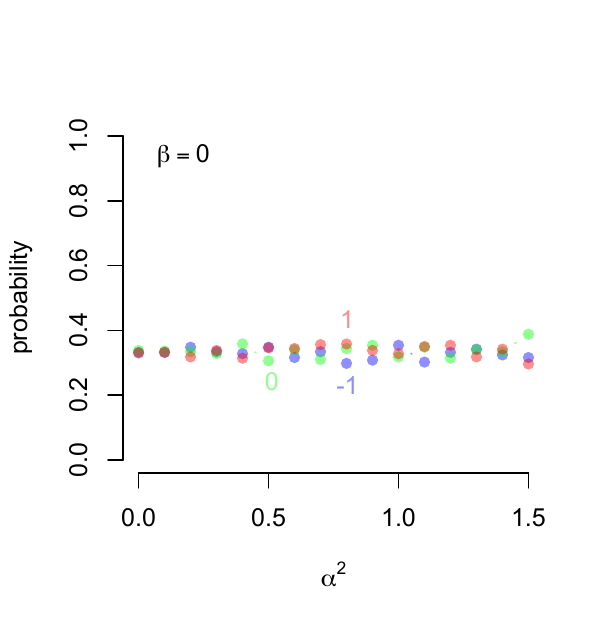}		&\pgfimage[width=.45\textwidth]{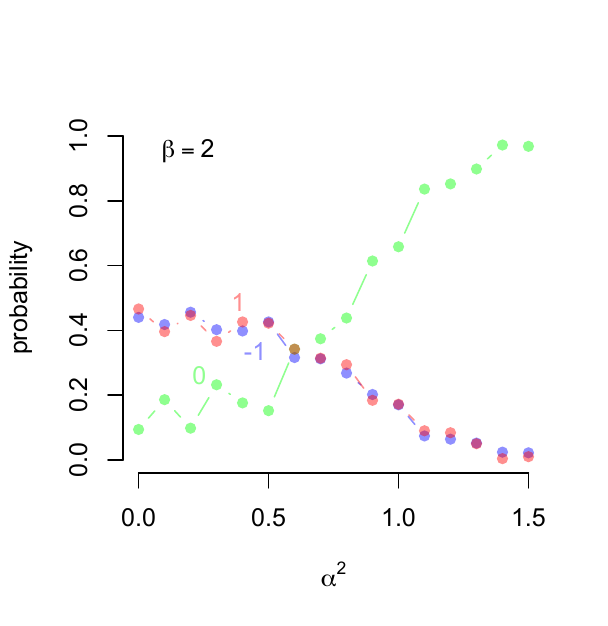}\\
	&	(a)		&(b)
\end{tabular}
\caption{Average proportions of a 5 node BC model (in equilibrium) observed $100$ times of $-1$ (blue), $0$ (green) and $1$ (red). Parameter settings are $\tau=0$  and $\beta=0$ in (a) and $\beta=2$ in (b) as a function of $\alpha^2$, controling the proportion of 0s when $\beta$ is large enough (temperature is low). }
\label{fig:bc-alpha}
\end{figure}
The transition from zero probability of 0s ($\alpha^2=0$) to non-zero probability of 0s ($\alpha^2>0$) is known to be a sudden transition, a first-order transition.
The parameter $\alpha^2$ influences this transition by shifting the critical temperature at which it occurs: increasing $\alpha^2$ favours the neutral (0) state and moves the transition point accordingly \citep{Bouabci:2000}. Right below the critical temperature, the system settles into the $\pm1$-dominated phase, while right above it, the 0-dominated phase prevails. This can be seen in Figure \ref{fig:bc-alpha}(b), where the three possible values for the magnetization exist for a given range of $\alpha^2$ (in a small network the transition is less sudden). This is contrasted with the well-known second-order (continuous) transition for temperature $\beta$ that is also present in the Ising model \citep{Kindermann:1980,Grimmett:2006b}. In the BC model, first- and second-order transition lines meet at the tricritical points, where the three degenerate free-energy minima that coexist along the first-order line coalesce into a single flat minimum, as both the second and fourth derivatives of the free energy with respect to the order parameter vanish \citep{Kaufman:1990,Plascak:1993}.

To characterise the BC model in terms of its dynamics, we use mean field theory \citep[e.g.,][]{Plischke:1994,Barrat:2008}. It is then possible to show (by large deviation theory) that the mean field results are reasonably close to what can be expected from the original model \citep[e.g.,][]{Wainwright:2008,Waldorp:2020}. We then obtain (see Appendix \ref{app:mean-field} for details)
\begin{align}\label{eq:mean-field}
\mu 	&= \frac{2\sinh(\beta( \tau + \mu d\sigma))\exp(-\beta\alpha^2)  }
		{ 1 + 2\cosh( \beta(\tau + \mu d\sigma))\exp( -\beta\alpha^2 ) }
\end{align}
The mean field approximation gives the "magnetisation", the expectation of any of the variables in the graph as a function of the mean (average) magnetisation in the graph $\mu$, the threshold $\tau$, the interaction parameter $\sigma$, the parameter $\alpha^2$ controlling the proportion of 0s, and $d$ the average number of connections in the graph. 

Figure \ref{fig:dynamics} shows for particular parameter settings ($\beta=2$, $\tau=0$, $\sigma=1$, $\alpha^2=2$, and $d=5$) where the magnetisation of the BC model (approximated using mean field theory) will end up. In Figure \ref{fig:dynamics}(a) the Gibbs free energy is shown (see equation (\ref{eq:free-energy})), illustrating the three stable fixed points (blue circles) and two unstable fixed points (red circles). In Figure \ref{fig:dynamics}(b) we see the same fixed points of the mean field approximation  (\ref{eq:mean-field}), again with $\alpha^2=2$, at the points where the diagonal line intersects the curve. Figure \ref{fig:dynamics}(a) and (b) show attracting fixed points (blue circles) where the magnetisation can end up, and repelling fixed points (red circles) which the mean field will move away from. In contrast to the Ising model, there is now a third stable fixed point at 0. 
This is confirmed by Figure \ref{fig:dynamics}(c), showing a bifurcation diagram. The bifurcation diagram shows for each parameter $\alpha^2$ on the $x$-axis the value of $\mu^n(\mu_0)$, the iterated mean field map after $n$ iterations (we used $n=100$ here) starting at the random     initial value $\mu_0$ \citep{Hirsch:2004,Holmgren:1996}. 
A bifurcation can be seen in Figure \ref{fig:dynamics}(c) at $\alpha^2\approx 2.5$, where the three fixed points change to a single fixed point, such that $\mu=0$, corresponding to the case where all variables are 0.  
\begin{figure}\centering
\begin{tabular}{@{\hspace{-0.5em}} c @{\hspace{-0.7em}} c @{\hspace{-0.7em}} c @{\hspace{-0.7em}} c}
		&\pgfimage[width=.37\textwidth]{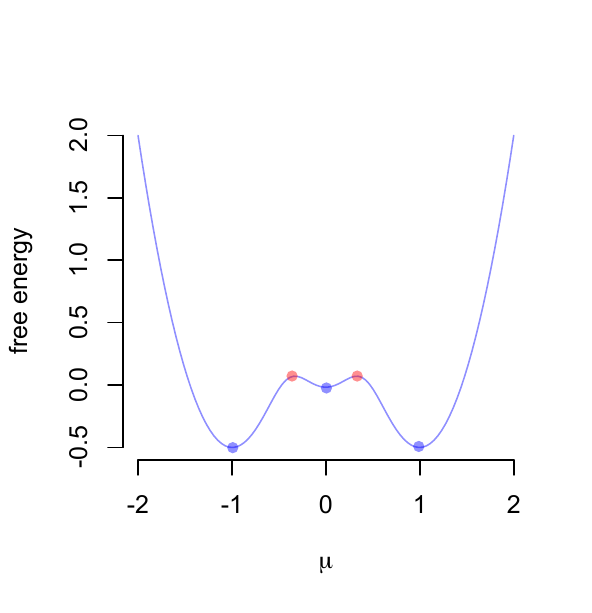}		
        &\pgfimage[width=.37\textwidth]{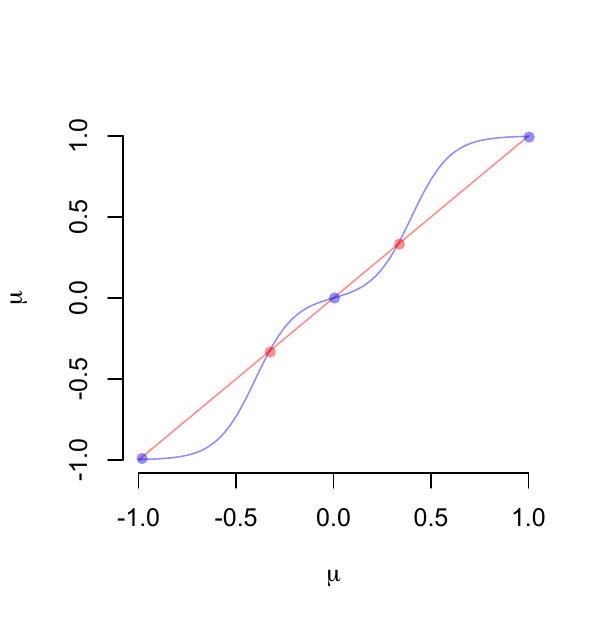}
        &\pgfimage[width=.37\textwidth]{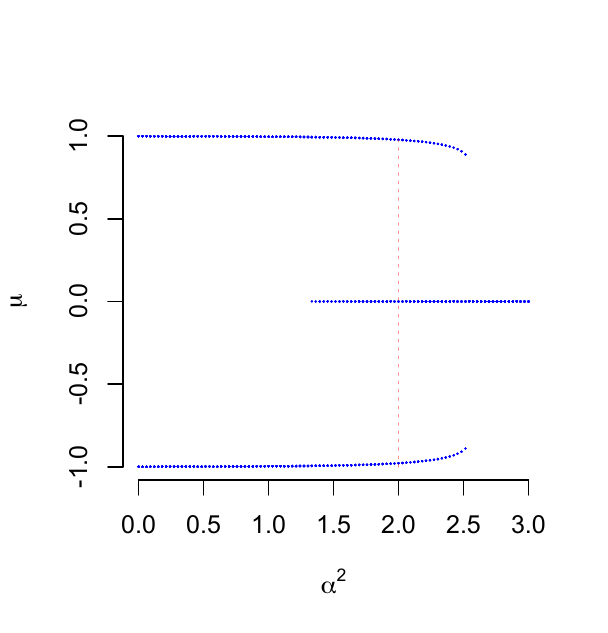}\\
	&	(a)		&(b)    &(c)
\end{tabular}
\caption{In (a) is the free energy (equation (\ref{eq:free-energy})), showing three stable fixed points. The settings are $\beta=2$, $\alpha^2=2$, $\tau=0$, $\sigma=1$ and $d=5$. The red circles are repelling fixed points and the blue circles are attracting fixed points. In (b) is the mean field approximation $\mu$ (see equation (\ref{eq:mean-field})) for the same settings; the intersection with the diagonal (red) line indicates fixed points (circles), with the same interpretations as in (a). In (c) is a bifurcation diagram as a function of $\alpha^2$ (and otherwise the same parameter settings as in (a)), showing points where after 500 iterations of the mean field map $\mu$ the system ends up. The vertical dotted line represents $\alpha^2=2$, corresponding to (a). }
\label{fig:dynamics}
\end{figure}
%

\subsection{Exponential family characterisation}\label{sec:exponential-family}\noindent
The BC model is a full (finite normalising constant) and minimal (no redundancies in the sufficient statistics) exponential model of the exponential family distributions, given that the parameter $\beta$ is absorbed by the threshold parameters $\tau_s$ and the interaction parameters $\sigma_{st}$ (see Appendix \ref{app:minimal-sufficient-statistic} for a proof). For the external field $\tau_s$, the interactions $\sigma_{st}$ and the non-zero cost $\alpha^2$ parameters, minimality implies that the model is identifiable, i.e., the mapping $\theta \mapsto p_\theta(x)$ is injective \citep{Barndorff-Nielsen:1994,Wainwright:2008}, and this is the case we discuss here. 

The fact that we are dealing with a minimal sufficient exponential family distribution allows for a bijective map between the space of the means $\mu_s=\mathbb{E}(\phi_s(X))$, where $\phi_s(x)$ is either $x_s$, $x_s^2$ or $x_s x_t$, and the natural parameter space $\Omega$. The map is given by the derivative of the log partition function $\nabla \log Z_\theta: \Omega\to \mathbb{M}$, where $\Omega$ is the natural parameter space such that $\theta\in \Omega$ and $\mathbb{M}$ is the space of means of the sufficient statistics $\mu_s=\mathbb{E}(\phi_s(X))\in \mathbb{M}$ \citep[][Proposition 3.2, see Appendix \ref{app:properties-minimal-exponential-family} for more details]{Wainwright:2008}. This mapping is bijective for points in the interior of the image $\mathbb{M}$, where the gradient of the log-partition function is well-defined and strictly convex. On the boundary of $\mathbb{M}$ this bijection fails, and the maximum likelihood estimator may not exist or may diverge. The inverse map yields unique maximum likelihood estimates $\hat{\theta}\in \Omega$ within the full parameter space $\mathbb{R}^d$, where $d=|V|+|E|+1$ for the BC model; for $|V|=m$ nodes we have $m(m+1)/2+1$ parameters. 

We give a small example to illustrate some properties of the BC model and the relation between the sufficient statistics and the means of the BC model. The simplest non-trivial case is a graph with two nodes and one edge. Then $x\in \{-1,0,+1\}^2$ and so the sufficient statistic is $\phi(x)=(x_1,x_2, x_1x_2,x_1^2+x_2^2)$. It turns out that for a finite space like $\{-1,0,+1\}^2$, the means $\mu_s=\mathbb{E}(\phi_s(X))$ are obtained by considering the space inside of all sufficient statistics. The set of possible outcomes of $\phi(x_1,x_2)$ with $(x_1,x_2)\in \{-1,0,+1\}^2$ is given by
\begin{align*}
\Phi_2= \{  x_1,x_2,x_1x_2, x_1^2+x_2^2\mid (x_1,x_2)\in \{-1,0,+1\}^2  \}
\end{align*}
For this example we have $3^2$ possibilities for two nodes, and so can write out the set of points in $\Phi_2$ 
\begin{align*}
\begin{matrix*}[l]
\phi(-1,-1)=(-1,-1,+1,2)		&\quad\phi(-1,0)=(-1,0,0,+1)\\
\phi(-1,+1)=(-1,+1,-1,2)		&\quad\phi(0,-1)=(0,-1,0,+1)\\
\phi(+1,+1)=(+1,+1,+1,2)		&\quad\phi(0,+1)=(0,+1,0,+1)\\
\phi(+1,-1)=(+1,-1,-1,2)		&\quad\phi(+1,0)=(+1,0,0,+1)\\
\phi(0,0)=(0,0,0,0)
\end{matrix*}
\end{align*}
The space with all means $\mu_s=\mathbb{E}(X_s)$ and $\mu_{st}=\mathbb{E}(X_sX_t)$ for all $s,t\in \{1,2\}$ is the convex hull of these points \citep[i.e., the marginal polytope,][]{Wainwright:2008}. For our small example this is
\begin{align*}
\mathbb{M}=\{\mu\in \mathbb{R}^4\mid \mu_s=\mathbb{E}(X_s), \mu_{st}=\mathbb{E}(X_sX_t), s,t\in \{1,2\}\}
\end{align*}
Figure \ref{fig:marginal-polytope}(a) and \ref{fig:marginal-polytope}(b) show the four-dimensional space split up into two separate figures (these two figures are connected with each other at the vertices). In Figure \ref{fig:marginal-polytope}(a) is the polytope for the case where $x_1$ and $x_2$ are non-zero such that $x_1^2+x_2^2=2$ (this is the first column of points in $\Phi_2$). In \ref{fig:marginal-polytope}(a) this is represented as the triangular shape. In Figure \ref{fig:marginal-polytope}(b) is the polytope for the case where one of $x_1$ or $x_2$ is 0 such that $x_1^2+x_2^2=1$ (the second column of points in $\Phi_2$). Then there is a final point in $\Phi_2$ which is $(0,0,0,0)$.  
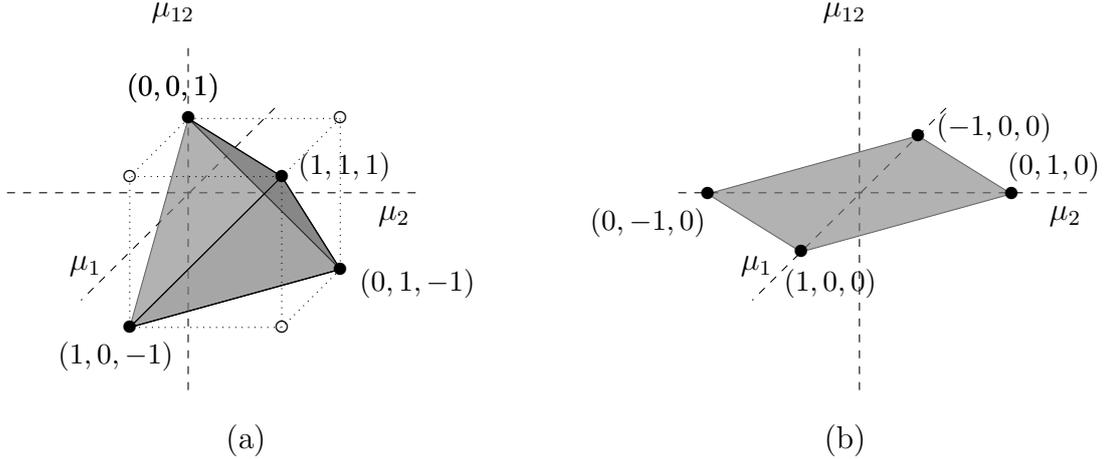
\begin{figure}
\begin{tabular}{c @{\hspace{3em}} c }
\begin{tikzpicture}[scale=2,>=stealth]   
    \draw[dashed] (-1,0.7,0.5) -- (1.7,0.7,0.5); 	
    \draw[dashed] (0.5,-0.3,1.3) -- (0.5,2,1.3);	
    \draw[dashed] (0.3,0.8,-0.7) -- (0.45,0.95,3);	

    \draw[fill=gray,opacity=0.6] (0,0,1) -- (0,1,0) -- (1,1,1) -- cycle;	
    \draw[fill=gray,opacity=0.7] (1,0,0) -- (0,0,1) -- (1,1,1) -- cycle;	
    \draw[fill=gray,opacity=0.8] (1,0,0) -- (0,1,0) -- (1,1,1) -- cycle;	
    \draw[dotted] (1,0,0) -- (1,0,1) -- (0,0,1);	
    \draw[dotted] (0,1,0) -- (0,1,1) -- (0,0,1);	
    \draw[dotted] (1,0,1) -- (1,1,1) -- (0,1,1);	
    \draw[dotted] (0,1,0) -- (1,1,0) -- (1,0,0);	
    \draw[dotted] (1,1,0) -- (1,1,1);         
    \draw[-] (1,1,1) -- (0,1,0);	
    \draw[-] (1,1,1) -- (1,0,0);	
    \draw[-] (0,0,1) -- (1,0,0);	
    \draw[-] (0,0,1) -- (1,0,0);	
    \draw[-] (0,0,1) -- (1,1,1);	
    \draw (1.7,0.7,.4+0.5) node{$\mu_{2}$};		
    \draw (-.2+0.1,0.6,1.5) node{$\mu_{1}$};	
    \draw (-.1,1.7,0) node{$\mu_{12}$};	
    \draw (1,1,1) node{$\bullet$};		
    \draw (0,1,1) node{$\circ$};		
    \draw (0,0,1) node{$\bullet$}; 	
    \draw (0,1,0) node{$\bullet$};		
    \draw (1,0,0) node{$\bullet$};		
    \draw (1,0,1) node{$\circ$};		
    \draw (1,1,0) node{$\circ$};		
    \draw (0.1,0,1.5) node{\small $(1,0,-1)$};		
    \draw (1.7,0.1,0.5) node{\small $(0,1,-1)$};		
    \draw (-0.1,1.2,0) node{\small $(0,0,1)$};		
    \draw (-0.1,1.2,0) node{\small $(0,0,1)$};		
    \draw (-0.05,-0.4,-2.8) node{\small $(1,1,1)$};	
  \end{tikzpicture}  

&

\begin{tikzpicture}[scale=2,>=stealth]   
    \draw[dashed] (-1,0.7,0.5) -- (1.7,0.7,0.5); 	
    \draw[dashed] (0.5,-0.3,1.3) -- (0.5,2,1.3);	
    \draw[dashed] (0.3,0.8,-0.7) -- (0.45,0.95,3);	
    \draw[fill=gray,opacity=0.6] (1,0.5,0) -- (0,0.5,1) -- (-1,0.5,0) -- (0,0.5,-1) -- cycle;	
    \draw (1.7,0.7,.4+0.5) node{$\mu_{2}$};		
    \draw (-.2+0.1,0.6,1.5) node{$\mu_{1}$};	
    \draw (-.1,1.7,0) node{$\mu_{12}$};	
    \draw (-1,0.5,0) node{$\bullet$};		
    \draw (0,0.5,1) node{$\bullet$};		
    \draw (0,0.5,-1) node{$\bullet$}; 	
    \draw (1,0.5,0) node{$\bullet$};		
    \draw (0,0.1,0.5) node{\small $(1,0,0)$};		
    \draw (-1.2,0.5,0.5) node{\small $(0,-1,0)$};		
    \draw (0.5,0.55,-1) node{\small $(-1,0,0)$};		
    \draw (0.2,-0.4,-2.8) node{\small $(0,1,0)$};	
\end{tikzpicture}  
\\
(a)	& (b)
\end{tabular}
\caption{Marginal polytope of the BC model with two nodes and one edge, which in $\mathbb{R}^4$ is a connected figure with the vertices connected to each other. In (a) is the space for the level 2, that is, $\phi_4(x)=x_1^2+x_2^2=2$, and in (b) $\phi_4(x)=1$; there is another level where $\phi_4(x)=0$ and is the point $(0,0,0,0)$. }
\label{fig:marginal-polytope}
\end{figure}

The means $\mu_s$ and $\mu_{st}$ in the interior of $\mathbb{M}$ (i.e., not on the nodes, edges or faces of the polytope) are associated with a unique $\theta_s$ and $\theta_{st}$ in the parameter space $\Omega$ and $\nabla_s\log Z_\theta(\mu_s)=\theta_s$ \citep{McCulloch:1988}. Since the model is full and minimal, the natural parameter space is $\mathbb{R}^4$, and the mean parameter space $\mathbb{M}$ is a bounded convex polytope in $\mathbb{R}^4$ \citep[see, e.g.,][Corollary 6.16, Chapter 1]{Lehmann:1983}.


\section{Pseudo-likelihood}\label{sec:pseudo-likelihood}\noindent
It is immediate that the probability $p_\theta$ of the BC model is computationally impeded by the normalisation constant in equation (\ref{eq:normalising-constant}) because it sums over all configurations of $x\in \{-1,0,+1\}^V$. For example, with 20 nodes in a graph we require a sum over $3^{20}$ (nearly 3.5 billion) elements. We therefore require simplifications to be able to approximate the normalisation constant. 

An approximation to the joint distribution is used such that the normalisation constant becomes a product. One possibility to obtain such a product is to assume that the graph is empty, and we then obtain the product distribution $p_\theta(x)=p_\theta(x_1)p_\theta(x_2)\cdots p_\theta(x_m)$ \citep[][Chapter 5]{Wainwright:2008}. This solves the computational problem because now we can evaluate each variable separately. However, this so-called naive approach corresponds to assuming independence between variables and may significantly misrepresent the true joint distribution (i.e., it is a lower bound much smaller than the joint distribution). Another approach is to approximate the joint distribution with the product of the full conditional distributions \citep{Besag:1974}. This is often referred to as the pseudo-likelihood and has been shown to be consistent \citep[see, e.g.,][and see Appendix \ref{app:consistency-pseudo-likelihood} for more details]{Nguyen:2017}. This pseudo-likelihood approach is the one we take here. 

For any variable $X_s$, we obtain its conditional distribution given all remaining variables $X_{t\ne s}:= X_{V\backslash \{s\}}$. This conditional distribution is denoted by $p_\theta(x_s\mid x_{t\ne s})$, where we collect the parameters $\tau$, $\sigma_{st}$ for $t\ne s$ and $\alpha^2$ in the vector $\theta$. (Note that we ignore the subscript $s$ to indicate the dependent variable; we do this for convenience of notation.) Then the joint distribution is approximated by the product of all the conditional distributions.
\begin{align}\label{eq:pseudo-likelihood}
p_{\theta}(x)\approx p_\theta(x_1\mid x_{t\ne 1})p_\theta(x_2\mid x_{t\ne 2})\cdots p_\theta(x_m\mid x_{t\ne m})
\end{align}
The product of conditional distributions is called the pseudo-likelihood. 

For the BC model, the conditional distribution of $X_s$ given the remaining variables $X_{t\ne s}$ is
\begin{align}\label{eq:conditional-distribution}
p_\theta(x_s\mid x_{t\ne s}) =  \frac{\exp ( \tau_s x_s + x_s\sum_{t\ne s} \sigma_{st} x_t -\alpha^2 x_s^2) }
					{ 1 + 2\cosh(\tau_s + \sum_{t\ne s} \sigma_{st} x_t)\exp(-\alpha^2) }
\end{align}
Note that $\beta$ is absorbed into the other parameters, and does not appear explicitly. Only the relative magnitudes of these parameters matter. This conditional distribution is also used for the Gibbs sampler, that we use to simulate observations from the model, in which the full conditional in equation (\ref{eq:conditional-distribution}) for each variable is used to form a Markov chain to obtain the joint distribution \citep{Haggstrom:2002}. For the Gibbs sampler we include the inverse temperature; we describe the details of the Gibbs sampler in Appendix \ref{app:gibbs-sampler}.  

Because the product of conditionals in equation (\ref{eq:pseudo-likelihood}) results in a similar distribution as the joint distribution in equation (\ref{eq:joint-distribution}), we can apply the theory of exponential families to the pseudo-likelihood setting (see Appendix \ref{app:pseudo-likelihood} for more details). For instance, we easily find that the conditional expectation of $X_s$ given $x_{t\ne s}$ is obtained by 
\begin{align*}
\nabla_s \log Z_\theta = \mathbb{E}(X_s\mid x_{t\ne s}) 
=
\frac{2\sinh(\tau_s + \sum_{t\ne s} \sigma_{st} x_t)\exp(-\alpha^2)} { 1 + 2\cosh(\tau_s + \sum_{t\ne s} \sigma_{st} x_t)\exp(-\alpha^2) } 
\end{align*}
We obtain a similar result when using mean field theory (see Appendix \ref{app:mean-field}). 
It is also possible to derive second-order derivatives, which gives the conditional covariance matrix of the sufficient statistics \citep[][and see Appendix \ref{app:pseudo-likelihood}]{McCulloch:1988,Wainwright:2008,Nguyen:2017}.

\subsection{Point estimation}\label{sec:estimation}\noindent
Let $X=(X_1,X_2,\ldots, X_m)$ be the $n\times m$ matrix with $X_s$ the vector associated to node $s\in \{1,2,\ldots,m\}$ for the observations $(x_{s,1},x_{s,2},\ldots, x_{s,n})$ of dimension $n$. We use the log pseudo-likelihood function %
\begin{align}\label{eq:pseudo-likelihood-estimation}
\ell_\theta^n(X) = -\frac{1}{n}\sum_{i=1}^n\log p_\theta(x_{s,i}\mid x_{t\ne s,i})
\end{align}
for estimation \citep{Besag:1974}. Minimisation of $\ell_\theta^n$ leads to a unique estimate if the parameter $\theta$ is in the interior of $\Omega$ (equivalently, if $\mu$ is in the interior of $\mathbb{M}$). We call an estimate obtained with the log pseudo-likelihood function in equation (\ref{eq:pseudo-likelihood-estimation}) a pseudo-likelihood estimate and denote it by $\hat{\theta}$.
It has been shown that under relatively mild conditions the pseudo-likelihood estimate $\hat{\theta}$ converges to the true parameter $\theta_0$ when $n\to \infty$ \citep[][Theorem 1, see also Appendix \ref{app:consistency-pseudo-likelihood}]{Nguyen:2017}. 

In many cases, there are relatively many edges compared to the number of observations. In such high-dimensional settings, we may use a penalty (Lagrangian) for the pseudo-likelihood \citep{Buhlmann:2014b}. Often the lasso penalty is used \citep[least absolute shrinkage and selection operator (lasso),][]{Hastie:2015}, introduced by \citet{Tibshirani:1996}, and makes the solution unique with respect to the assumption of sparsity. Sparsity refers to the assumption that relatively few (say no more than $5\%$) of all possible non-zero coefficients, are non-zero \citep{Hastie:2015}. Several theoretical results of the lasso with implications for small samples have been obtained in \citet{Wainwright:2009}, and has been applied to logistic regression \citep{Ravikumar:2010,Borkulo:2014,Waldorp:2019} and inference  \citep[$p$-values and confidence intervals,][]{Geer:2014,Waldorp:2024}. 

The lasso estimate $\hat{\theta}$ is obtained by minimising
\begin{align}\label{eq:lasso}
\ell_\theta^n(X) + \lambda ||\sigma ||_1 = -\frac{1}{n}\sum_{i=1}^n\log p_\theta(x_{s,i} \mid x_{t\ne s,i}) + \lambda \sum_{t\ne s} |\sigma_{st}|
\end{align}
where the sum of absolute values of the parameters $\sigma_{st}$ is included in the Lagrangian with Lagrangian parameter $\lambda$. The parameter $\lambda$ can be obtained by cross-validation \citep{Hastie:2015} or a type of Bayesian information criterion \citep[extended BIC,][]{Foygel:2010,Haslbeck:2020}.

The lasso is convenient because it also has the property of selecting the parameters \citep{Buhlmann:2011}. The statistical guarantees of low false positive rate and reasonable true positive rate come at the price of several assumptions, of which the sparsity assumption is the most well-known:  for each node the number of non-zero parameters (connections) is sparse, i.e., $s_0$ is in the order of $\sqrt{\smash[b]{n/\log m}}$, where $m$ is the number of nodes $|V|=m$. These assumptions lead to the guarantee for consistency of $\hat{\theta}$ to $\theta_0$ (the true value) if for each node, $n$ is at least $2\gamma m\log (1-\gamma)m)$, with $\gamma\in (0,1)$ and $\lambda$ is of the order $\sqrt{\smash[b]{\log (m)/n}}$ \citep[][Section 6.7]{Wainwright:2009,Buhlmann:2011}. For instance, in a graph with 30 nodes  we require at least $n=52$ observations for accurate false positive rates. More details about the assumptions and statistical guarantees are in Appendix \ref{app:pseudo-likelihood}. 

\subsection{Confidence intervals}\label{sec:standard-errors}\noindent
To quantify the uncertainty of the parameter estimates $\hat{\theta}$ we use confidence intervals. We use the normal approximation to confidence intervals \citep{Cox:2007}. For parameter estimate $\hat{\theta}_{s}$ (with $s$ referring to any of the BC model parameters) with standard error $\sqrt{\smash[b]{\Gamma_{ss}/n}}$,  where $\Gamma$ is the asymptotic variance matrix of the estimator,  we obtain a $95\%$ confidence interval as
\begin{align*}
\left(\hat{\theta}_{s} - 1.96\sqrt{\Gamma_{ss}/n}, \phantom{i} \hat{\theta}_{s} + 1.96\sqrt{\Gamma_{ss}/n}\right)
\end{align*}
In obtaining accurate confidence intervals, i.e., confidence intervals with coverage of $95\%$, we face two problems. First, in using the pseudo-likelihood to obtain the point estimates (to avoid the computational issue of calculating the normalisation constant) we have misspecified the model. Hence, the usual way of computing standard errors \citep[using the inverse of the second-order derivatives, the Hessian,][]{Nelder:1972,McCulloch:1988} is not valid. Second, in using the lasso, the sampling distribution of the lasso estimate is not normal; it is not continuous and has probability mass of one at zero if the true parameter is zero \citep{Buhlmann:2014b}. We discuss each problem briefly. We give more details of the solution to these two issues in Appendix \ref{app:standard-errors}

To determine the standard errors from the pseudo-likelihood we need to take into account that the model specification is incorrect \citep{Schmid:2023}. The common result that the product of first-order derivatives (Fisher information) cancels with the inverse of the second-order derivative (Hessian), does not hold when the model is misspecified \citep[e.g.,][Example 5.25]{White81,Vandervaart98}. Hence, when there is misspecification (as is the case for the pseudo-likelihood) we cannot use the standard Fisher information or the Hessian for the standard errors. 
This implies that the correct standard errors must be obtained from the so-called sandwich estimate, where both the Hessian and Fisher information are used together \citep{White81,Schmid:2023}.

The sampling distribution of the lasso estimate is not a Gaussian distribution, nor is it continuous \citep{Potscher:2009,Buhlmann:2014b,Waldorp:2024}. Therefore, obtaining standard errors and constructing confidence intervals is not straightforward \citep{Dezeure:2015,Williams:2020b}. One possibility that leads to accurate confidence intervals is to use the desparsified lasso \citep[][and see Appendix \ref{app:standard-errors}]{Geer:2014,Buhlmann:2013,Waldorp:2024}. The distribution of the desparsified lasso also leads to the sandwich estimate with the Hessian and first-order derivatives \citep{Geer:2014,Buhlmann:2013}, similar to the sandwich of the pseudo-likelihood. For the case where we have the number of observations that is in the same order of the number of parameters we use the shrinkage estimate of the second-order derivatives to obtain accurate confidence intervals. In Appendix \ref{app:dLshrink} we prove that using a shrinkage parameter of $1/n^{1+\gamma}$ for $\gamma>0$ leads to good coverage. 


\section{Numerical illustrations}\label{sec:numerical-illustrations}\noindent

\subsection{Synthetic data}\noindent
To illustrate the estimation of the BC parameters $\theta$, consisting of connection parameters $\sigma_{st}$, thresholds $\tau_s$ and the zero-controlling parameter $\alpha^2$, we create random networks, generate data with the fixed parameters using the Gibbs sampler, and then estimate the parameters with the pseudo-likelihood method. We are mostly interested in the true and false positive rates of the lasso (the ability to recover the non-zero edge weights from data, treating the true sparsity pattern as ground truth) and the confidence intervals (coverage rates). 

Networks of 10, 20 and 30 nodes are created where the edges are placed randomly between two nodes with probability of $p_e=0.3$ (Erd\"{o}s-R\'{e}nyi networks $\mathcal{G}(m,p_e)$). So for 10 node networks we have on average 13.5 edges. Then using the BC model we obtain data using the Gibbs sampler with parameters
\begin{align*}
\tau_s = 1.2, \qquad \sigma_{st}=1, \qquad \alpha^2=1.2
\end{align*}
 for all $s\in V$ and $(s,t)\in E$. For the data sets we vary the sample size from 50 to 260. We generated 100 datasets for each number of observations, and then estimated the parameters for each of these 100 datasets to determine the accuracy of the estimates. 
 
Network parameters are estimated with the pseudo-likelihood method described above. 
To determine the accuracy of the pseudo-likelihood and lasso combination, we consider several measures. We consider both the accuracy of identification of non-zero parameters with the lasso and the coverage of the confidence intervals. We average results over parameters and so use the symbol $\hat{\theta}$ for any of the parameters in the BC model, and $\theta_0$ for the true value. 

To determine the accuracy of identification of non-zero parameters we use the true positive rate (TPR)
\begin{align*}
\text{true positive rate} = \mathbb{P}_n(\hat{\theta}\ne 0\mid \theta_0\ne 0 )= \frac{ \#\{ \text{identified and true} \} } { \#\{ \text{true}\} } 
\end{align*}
where $\mathbb{P}_n$ refers to the probability (proportion over simulations with sample size $n$) determined from the simulations. 
And the false positive rate
\begin{align*}
\text{false positive rate} = \mathbb{P}_n(\hat{\theta}\ne 0\mid \theta_0 = 0) = \frac{ \#\{ \text{identified and false} \} } { \#\{ \text{false}\} }
\end{align*}
For the accuracy of the $95\%$ confidence intervals we consider the coverage
\begin{align*}
\text{coverage} = \mathbb{P}_n( \text{true parameter in $95\%$ confidence interval}) 
\end{align*}
Finally, for the accuracy of the point estimates we consider the bias
\begin{align*}
\text{bias} = | \mathbb{E}_n( \hat{\theta}) - \theta_0  |
\end{align*}
where $\mathbb{E}_n$ refers to the average determined from the simulations. 

\subsubsection{Results}\label{sec:results}\noindent
In Figure \ref{fig:tpr-fpr-bias}(a) the true and false positive rates (TPR and FPR) of the edge parameters for networks with either 10, 20 or 30 nodes, as a function of sample size. Corresponding to the theory of lasso estimation in Section \ref{sec:estimation}, the FPR should not be higher than the 0.05 level for any number of observations, even for 30 nodes. It can be seen in Figure \ref{fig:tpr-fpr-bias}(a) that the FPR is not higher than 0.05 from $n=50$ observations or for any number of nodes in the network. So, the probability of a false positive remains low. The TPR increases  with $n$, and depends on the number of nodes in the network. 
In Figure \ref{fig:tpr-fpr-bias}(b) we see that the coverage of the $95\%$ confidence intervals is close to the nominal $0.95$ for the sandwich estimator (using the shrinkage estimate with parameter $\rho=1/n^{5/4}$, see Appendix \ref{app:dLshrink}) at any number of observations, but too low for the $95\%$ confidence intervals based on the Fisher information only. Figure \ref{fig:coverage}(b) shows the standard error estimates based on either the sandwich or the Fisher information. It clearly shows that the sandwich standard errors are reasonable, but the Fisher estimates are too low. 
\begin{figure}[t]\centering
\begin{tabular}{@{\hspace{-1em}} c @{\hspace{1em}} c @{\hspace{-1em}} c}
		&\pgfimage[width=.52\textwidth]{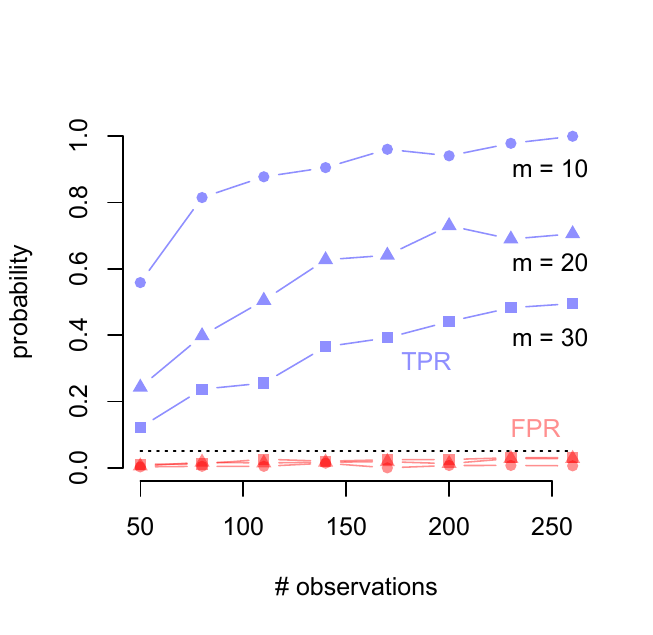}		&\pgfimage[width=.52\textwidth]{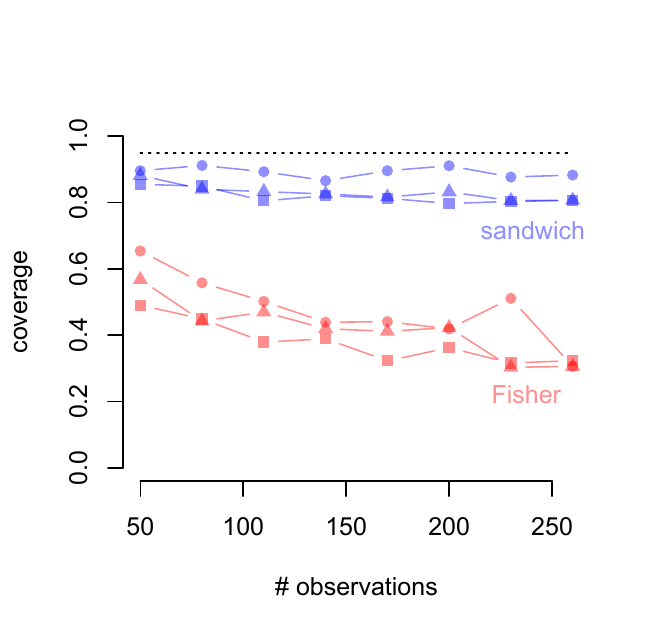}\\
	&	(a)		&(b)
\end{tabular}
\caption{Results for random networks of 10, 20 and 30 nodes with edge probability $0.3$, $\sigma_{st}=1$, $\alpha^2=1.2$ and $\tau=1.2$. Three different network sizes are shown: $\circ$ $m=10$, $\triangle$ $m=20$ and $\Box$ $m=30$. In (a) are the true (blue) and false (red) positive rates averaged over all edge parameters. In (b) is the coverage of confidence intervals using either the sandwich (blue) or the Fisher information (red).}
\label{fig:tpr-fpr-bias}
\end{figure}

In Figure \ref{fig:coverage}(a) it can be seen that the bias is low and  decreases slowly, for the connectivity parameters $\sigma$, for the $\alpha^2$ parameters controlling the zeros, and for the threshold parameters $\tau$.
\begin{figure}\centering
\begin{tabular}{c @{\hspace{0em}} c }
		\pgfimage[width=.52\textwidth]{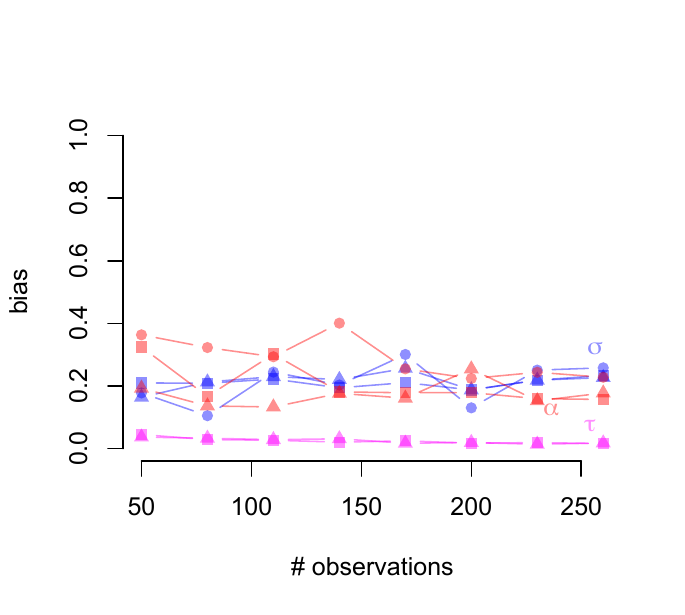}		&\pgfimage[width=.52\textwidth]{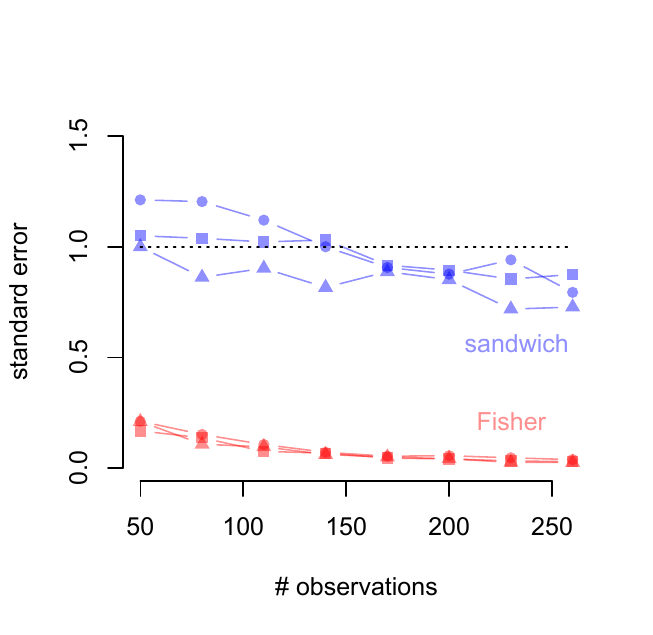}\\
	(a)		&(b)
\end{tabular}
\caption{Results for networks of size 10 (circles), 20 (triangles) and 30 (squares), with edge probability $0.3$, $\sigma_{st}=1$, $\alpha^2=1.2$ and $\tau=1.2$. In (a) is the bias for each of the networks separately and each of the parameters separately. And in (b) are the standard errors of the estimates (averaged over all parameters) using either the sandwich and Fisher information.}
\label{fig:coverage}
\end{figure}
%

%

\subsection{Application to attitudes on political issues}\label{sec:application-to-data}\noindent
To illustrate some interesting properties of the model we apply the Blume-Capel estimation to attitudes toward societal issues related voting behaviour. The data were collected on the online platform \textit{Stemwijzer}, which assists individuals in making a voting decision in the Dutch national election. Individuals indicate their agreement or disagreement with several political statements and then they receive an indication how well-aligned their beliefs are with the different political parties. The data were collected shortly before the Dutch national election in 2024 by the company Prodemos, which granted us access to the data. We obtained 10,000 observations (though we do not know if all were from different individuals) for 30 variables. We selected the 19 most interesting variables (see Table \ref{tab:stemwijzer-variables}). 

We applied the Blume-Capel model to obtain an estimate of the network parameters $\sigma_{st}$, the threshold (external field) parameters $\tau_s$, and we obtain estimates of the $\alpha^2_s$ parameters. Note that the $\alpha^2_s$ are defined for each node \textit{separately}. This is different from the original model but is a natural consequence of the pseudo-likelihood algorithm used to obtain estimates for each node separately.
To obtain the estimates, we set the penalty parameter $\lambda$ for the lasso to $\sqrt{\smash[b]{\log (m)/n}}$, which amounts to $0.0197$, with $m=19$ nodes and $n=10,000$ observations. The shrinkage parameter to obtain standard errors and confidence intervals was set to $1/n^{5/4}$, as in the simulations. We ran the algorithm also using a subsampling procedure to test for reliability, with results very similar to the result we present here (see Appendix \ref{app:application-additional}). 

Figure \ref{fig:stem-graph-sigma}(a) shows the graph obtained from the algorithm. The thickness of the edges reflects their coefficients, which are given as desparsified values in Figure \ref{fig:stem-graph-sigma}(b) along with their confidence intervals (see Appendix \ref{app:dLshrink}). The confidence intervals are small, as expected with a large sample size. The network in Figure \ref{fig:stem-graph-sigma}(a) reflects the selection by the lasso algorithm and the threshold based on sparsity $\sqrt{\smash[b]{\log(m)/n}}$. The network shows several noteworthy results. First, we only observe positive edges, which aligns well with the notion that attitude networks strive for consistency \citep{Dalege:2016}. Second, we observe some clustering with nodes that tap into similar topics that are closely connected. For example, nodes 13, 14, 17, and 18 all represent statements related to immigration, and these nodes form a tight cluster. Third, except for node 15, all nodes have at least one edge connecting them to other nodes, indicating that information can flow through the whole network. 
\begin{figure}\centering
\begin{tabular}{c @{\hspace{0em}} c }
		\raisebox{2em}{\pgfimage[width=.52\textwidth]{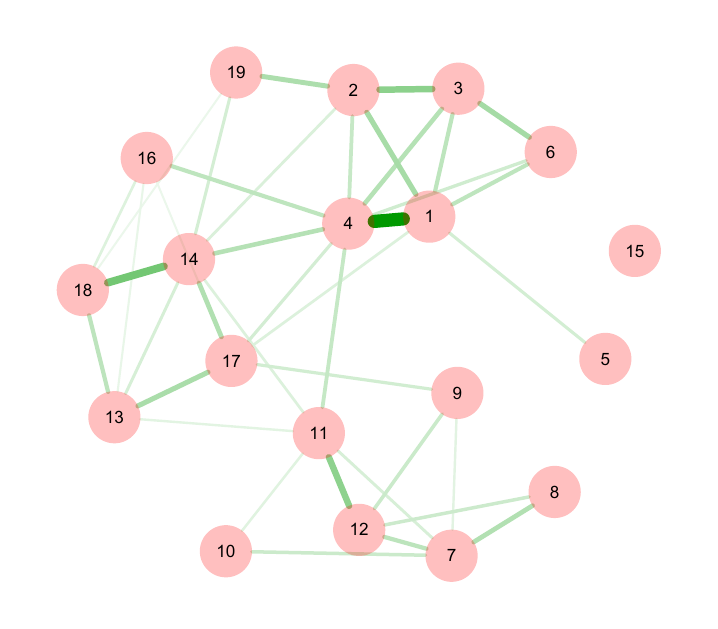}}		&\pgfimage[width=.52\textwidth]{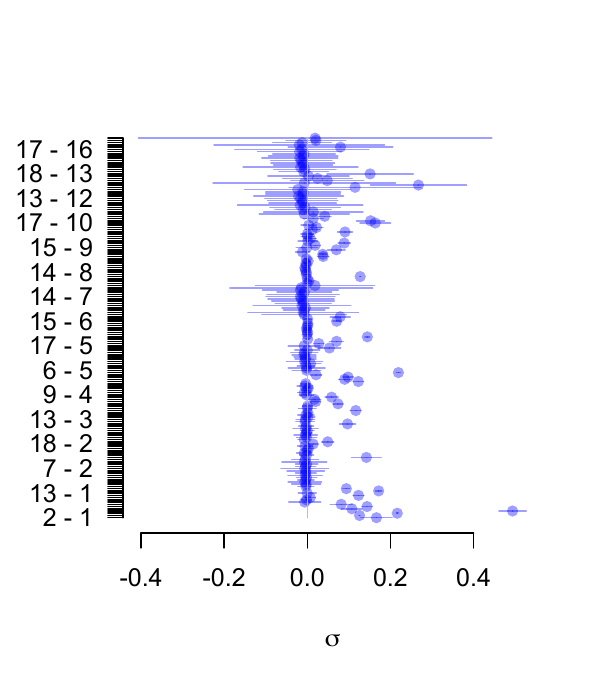}\\
	(a)		&(b)
\end{tabular}
\caption{In (a) is the network estimate using the BC model with the lasso algorithm, thresholded using $\sqrt{\smash[b]{\log(m)/n}}$; edges below this threshold are set to 0. In (b) are the confidence intervals of the edge parameters $\sigma$ based on the shrinkage estimate described in Appendix \ref{app:dLshrink}. In Table \ref{tab:stemwijzer-variables} are the labels for the nodes (variables) in the data.}
\label{fig:stem-graph-sigma}
\end{figure}

Figure \ref{fig:stem-alpha}(a) shows the confidence intervals for the $\alpha^2_s$ parameters. And in Figure \ref{fig:stem-alpha}(b) we see the regression of the frequency of zeros on the value of $\alpha^2_s$ for each of the 19 nodes. The high correlation, here $r=0.98$, between $\alpha^2_s$ and the frequency of zeros, suggests that the estimate of $\alpha^2_s$ is an accurate representation of the frequency of zeros. Note that the values of $\alpha_s^2$ are negative here. The interpretation as in the original BC model remains: the lower the value of $\alpha_s^2$ the lower the frequency of zeros. We obtain a similar result when setting the penalty parameter to $\lambda=0$, obtaining unregularised estimates. 
\begin{figure}\centering
\begin{tabular}{c @{\hspace{0em}} c }
\pgfimage[width=.52\textwidth]{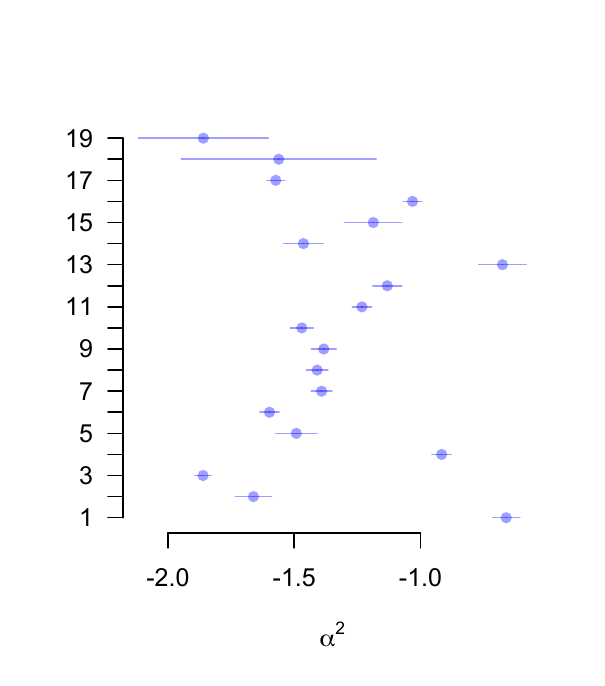}		&\pgfimage[width=.52\textwidth]{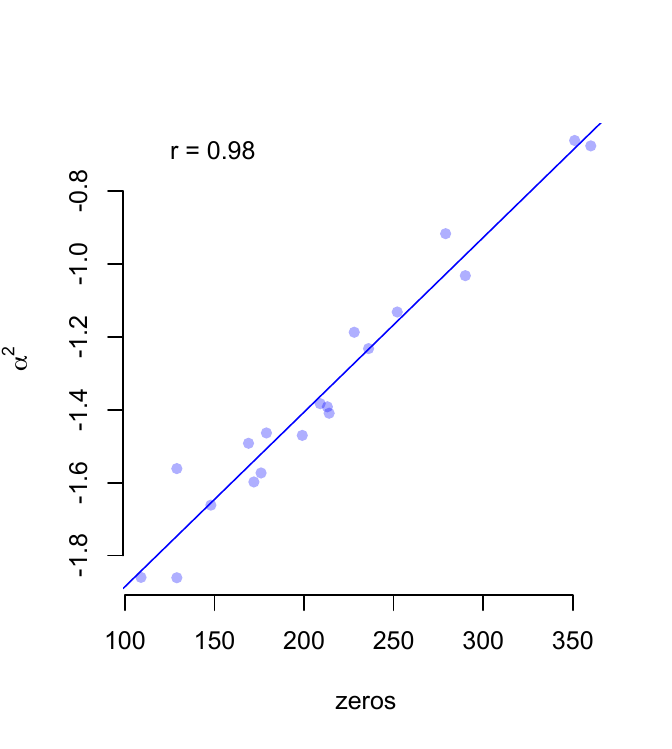}\\
	(a)		&(b)
\end{tabular}
\caption{In (a) are the desparsified estimates and the corresponding confidence intervals. In (b) are the number of zeros in each of the variables and the estimate of the parameter $\alpha^2_s$.   }
\label{fig:stem-alpha}
\end{figure}
\begin{table}
\begin{tabular}{l @{\hspace{3em}} l}
\midrule
1       &Vee (Cattle)\\[-0.5em]
2       &Lagere prijs benzine (Low gas prices)\\[-0.5em]
3       &Vliegbelasting (Airplane travel tax)\\[-0.5em] 
4       &Stikstof (Carbon emissions more expensive)\\[-0.5em] 
5       &Bouwen op landbouwgrond (Building on farm land)\\[-.5em] 
6       &Vuurwerk (Fireworks allowed)\\[-0.5em] 
7       &Eigen risico zorgverzekering (Insurance more expensive)\\[-0.5em] 
8       &OV voor 65-plussers (Free public transport for 65+)\\[-0.5em] 
9       &Kinderopvang zonder winstoogmerk (Daycare without profit)\\[-0.5em] 
10      &Afschaffen kostendelersnorm (Reduce benefits)\\[-0.5em] 
11      &Hoger minimumloon (Higher minimum wages)\\[-0.5em] 
12      &Huurstijging (Increase rent of housing)\\[-0.5em] 
13      &Visum Surinamers (Visa for people from Suriname)\\[-0.5em] 
14      &Gezinshereniging (Reunite immigrant families)\\[-0.5em] 
15      &Controle op religieuze groepen (Control on religious groups)\\[-0.5em] 
16      &Uitbreiding Europese Unie (Expansion European Union)\\[-0.5em] 
17      &Risicoprofilering op basis van nationaliteit (Risk assessment based on nationality)\\[-0.5em] 
18      &Ontwikkelingshulp en asielzoekers (More money for International development) \\[-0.5em] 
19      &Minimumstraffen (Minimum sentences)\\
\midrule
\end{tabular}
\caption{Numbers corresponding to the questions in the "stemwijzer" questionnaire in Dutch, with English translation. The numbers correspond to those in Figure \ref{fig:stem-graph-sigma}.}
\label{tab:stemwijzer-variables}
\end{table}
\section{Discussion}\label{sec:discussion}\noindent
We introduced the Blume-Capel model here as an interesting model that could be applied to data for inference on its parameters. The BC model is more interesting than the Ising model because (i) there are three stable states instead of two, (ii) there is a neutral position for the answer categories, and (iii) this neutral position is accompanied by a specific parameter. 

To obtain estimates we obtained the pseudo-likelihood equations. We used the lasso to deal with high dimensional data (many parameters, few observations), and used the sandwich estimator to obtain accurate confidence intervals for the lasso estimates. We showed that the combination of a shrinkage estimate for the second-order derivatives and the desparsified lasso yield correct confidence intervals. These results were confirmed by numerical simulations. 

The application to voting behaviour data made clear why the BC model is informative beyond the Ising model. Because we had $\{-1,0,1\}$ data, the BC model fits well with these response categories, and the neutral parameter of the position was shown to be highly correlated to the frequency of 0s in the data. The parameter $\alpha^2$, therefore, picks up the pattern in the data responsible for the frequency of 0s. We, therefore, propose that the $\alpha^2$ parameter be called neutrality or caution parameter, depending on the context.


\begin{thebibliography}{}

\bibitem[Ak{\i}nc{\i}, 2016]{Akinci:2016}
Ak{\i}nc{\i}, {\"U}. (2016).
\newblock Crystal field dilution in s-1 blume capel model: Hysteresis
  behaviors.
\newblock {\em Physics Letters A}, 380(14-15):1352--1357.

\bibitem[Albert and Swendsen, 2014]{Albert:2014}
Albert, J. and Swendsen, R.~H. (2014).
\newblock The inverse ising problem.
\newblock {\em Physics Procedia}, 57:99--103.

\bibitem[Barndorff-Nielsen and Cox, 1994]{Barndorff-Nielsen:1994}
Barndorff-Nielsen, O. and Cox, D. (1994).
\newblock {\em Inference and asymptotics}.
\newblock Chapman and Hall.

\bibitem[Barrat et~al., 2008]{Barrat:2008}
Barrat, A., Barthelemy, M., and Vespignani, A. (2008).
\newblock {\em Dynamical processes on complex networks}.
\newblock Cambridge University Press.

\bibitem[Besag, 1974]{Besag:1974}
Besag, J. (1974).
\newblock Spatial interaction and the statistical analysis of lattice systems.
\newblock {\em Journal of the Royal Statistical Society. Series B
  (Methodological)}, 36(2):192--236.

\bibitem[Blume, 1966]{Blume:1966}
Blume, M. (1966).
\newblock Theory of the first-order magnetic phase change in uo.
\newblock {\em Physical Review}, 141(2):517.

\bibitem[Blume et~al., 1971]{Blume:1971}
Blume, M., Emery, V.~J., and Griffiths, R.~B. (1971).
\newblock Ising model for the $\lambda$ transition and phase separation in he
  3-he 4 mixtures.
\newblock {\em Physical review A}, 4(3):1071.

\bibitem[Bouabci and Carneiro, 2000]{Bouabci:2000}
Bouabci, M.~B. and Carneiro, C. (2000).
\newblock Random-cluster representation for the blume--capel model.
\newblock {\em Journal of Statistical Physics}, 100:805--827.

\bibitem[B{\"u}hlmann, 2013]{Buhlmann:2013}
B{\"u}hlmann, P. (2013).
\newblock Statistical significance in high-dimensional linear models.
\newblock {\em Bernoulli}, 19(4):1212--1242.

\bibitem[B{\"u}hlmann et~al., 2014]{Buhlmann:2014b}
B{\"u}hlmann, P., Kalisch, M., and Meier, L. (2014).
\newblock High-dimensional statistics with a view toward applications in
  biology.
\newblock {\em Annual Review of Statistics and Its Application}, 1:255--278.

\bibitem[B{\"u}hlmann and van~de Geer, 2011]{Buhlmann:2011}
B{\"u}hlmann, P. and van~de Geer, S. (2011).
\newblock {\em Statistics for High-Dimensional Data: Methods, Theory and
  Applications}.
\newblock Springer.

\bibitem[Capel, 1966]{Capel:1966}
Capel, H.~W. (1966).
\newblock On the possibility of first-order phase transitions in ising systems
  of triplet ions with zero-field splitting.
\newblock {\em Physica}, 32(5):966--988.

\bibitem[Claeskens and Hjort, 2008]{Claeskens:2008}
Claeskens, G. and Hjort, N. (2008).
\newblock {\em Model selection and model averaging}.
\newblock Cambridge University Press, Cambridge.

\bibitem[Cox, 2007]{Cox:2007}
Cox, D. (2007).
\newblock {\em Principles of statistical inference}.
\newblock Cambridge University Press.

\bibitem[Dalege et~al., 2016]{Dalege:2016}
Dalege, J., Borsboom, D., Van~Harreveld, F., Van~den Berg, H., Conner, M., and
  Van~der Maas, H.~L. (2016).
\newblock Toward a formalized account of attitudes: The causal attitude network
  (can) model.
\newblock {\em Psychological review}, 123(1):2.

\bibitem[Dezeure et~al., 2015]{Dezeure:2015}
Dezeure, R., B{\"u}hlmann, P., Meier, L., and Meinshausen, N. (2015).
\newblock High-dimensional inference: confidence intervals, p-values and
  r-software hdi.
\newblock {\em Statistical science}, pages 533--558.

\bibitem[Epskamp, 2015]{Epskamp:2015aa}
Epskamp, S., M. G. W. L. J. . B.~D. (2015).
\newblock Network psychometrics.
\newblock In Irwing, P., Hughes, D., and Booth, T., editors, {\em Handbook of
  Psychometrics}. John Wiley \& Sons, New York.

\bibitem[Fernandez et~al., 2016]{Fernandez:2016}
Fernandez, M.~A., Korutcheva, E., and de~la Rubia, F.~J. (2016).
\newblock A 3-states magnetic model of binary decisions in sociophysics.
\newblock {\em Physica A: Statistical Mechanics and its Applications},
  462:603--618.

\bibitem[Ferri et~al., 2022]{Ferri:2022}
Ferri, I., P{\'e}rez-Vicente, C., Palassini, M., and D{\'\i}az-Guilera, A.
  (2022).
\newblock Three-state opinion model on complex topologies.
\newblock {\em Entropy}, 24(11):1627.

\bibitem[Foygel and Drton, 2010]{Foygel:2010}
Foygel, R. and Drton, M. (2010).
\newblock Extended bayesian information criteria for gaussian graphical models.
\newblock {\em Advances in neural information processing systems}, 23.

\bibitem[Graham and Grimmett, 2006]{Graham:2006}
Graham, B. and Grimmett, G. (2006).
\newblock Random-cluster representation of the blume--capel model.
\newblock {\em Journal of statistical physics}, 125:283--316.

\bibitem[Grimmett, 2022]{Grimmett:2022}
Grimmett, G. (2022).
\newblock The potts and random-cluster models.
\newblock In {\em Handbook of the Tutte Polynomial and Related Topics}, pages
  378--394. Chapman and Hall/CRC.

\bibitem[Grimmett, 2006]{Grimmett:2006b}
Grimmett, G.~R. (2006).
\newblock {\em The random-cluster model}, volume 333.
\newblock Springer Science \& Business Media.

\bibitem[H{\"a}ggstr{\"o}m, 2002]{Haggstrom:2002}
H{\"a}ggstr{\"o}m, O. (2002).
\newblock {\em Finite Markov chains and algorithmic applications}, volume~52.
\newblock Cambridge University Press.

\bibitem[Haslbeck and Waldorp, 2020]{Haslbeck:2020}
Haslbeck, J.~M. and Waldorp, L.~J. (2020).
\newblock mgm: Structure estimation for time-varying mixed graphical models in
  high-dimensional data.
\newblock {\em Journal of Statistical Software}, 93(8):1--46.

\bibitem[Hastie et~al., 2015]{Hastie:2015}
Hastie, T., Tibshirani, R., and Wainwright, M. (2015).
\newblock {\em Statistical learning with sparsity: the lasso and
  generalizations}.
\newblock CRC Press.

\bibitem[Hirsch et~al., 2004]{Hirsch:2004}
Hirsch, M., Smale, S., and Devaney, R. (2004).
\newblock {\em Differential equations, dynamical systems, and an introduction
  to chaos}.
\newblock Academic Press, 2nd edition.

\bibitem[Holmgren, 1996]{Holmgren:1996}
Holmgren, R. (1996).
\newblock {\em A first course in discrete dynamical systems}.
\newblock Springer Science \& Business Media.

\bibitem[Javanmard and Montanari, 2014]{Javanmard:2014}
Javanmard, A. and Montanari, A. (2014).
\newblock Confidence intervals and hypothesis testing for high-dimensional
  regression.
\newblock Technical report, arXiv:1306.317.

\bibitem[Kaufman and Kanner, 1990]{Kaufman:1990}
Kaufman, M. and Kanner, M. (1990).
\newblock Random-field blume-capel model: Mean-field theory.
\newblock {\em Physical Review B}, 42(4):2378.

\bibitem[Kindermann et~al., 1980]{Kindermann:1980}
Kindermann, R., Snell, J.~L., et~al. (1980).
\newblock {\em Markov random fields and their applications}, volume~1.
\newblock American Mathematical Society Providence, RI.

\bibitem[Lehmann, 1983]{Lehmann:1983}
Lehmann, E.~L. (1983).
\newblock {\em Theory of point estimation}.
\newblock Wiley and Sons, New York.

\bibitem[Maathuis et~al., 2018]{Maathuis:2018}
Maathuis, M., Drton, M., Lauritzen, S., and Wainwright, M. (2018).
\newblock {\em Handbook of graphical models}.
\newblock CRC Press.

\bibitem[Marsman et~al., 2018]{Marsman:2018}
Marsman, M., Borsboom, D., Kruis, J., Epskamp, S., van Bork, R., Waldorp, L.,
  Maas, H. v.~d., and Maris, G. (2018).
\newblock An introduction to network psychometrics: Relating ising network
  models to item response theory models.
\newblock {\em Multivariate Behavioral Research}, 53(1):15--35.

\bibitem[McCulloch, 1988]{McCulloch:1988}
McCulloch, R.~E. (1988).
\newblock Information and the likelihood function in exponential families.
\newblock {\em The American Statistician}, 42(1):73--75.

\bibitem[Nelder and Wedderburn, 1972]{Nelder:1972}
Nelder, J.~A. and Wedderburn, R. W.~M. (1972).
\newblock Generalized linear models.
\newblock {\em Journal of the Royal Statistical Society. Series A (General)},
  135(3):370--384.

\bibitem[Nguyen, 2017]{Nguyen:2017}
Nguyen, H. (2017).
\newblock Near universal consistency of the maximum pseudolikelihood estimator
  for discrete models.
\newblock {\em Annals of Statistics}, 2(2):22--23.

\bibitem[Plascak et~al., 1993]{Plascak:1993}
Plascak, J., Moreira, J., et~al. (1993).
\newblock Mean field solution of the general spin blume-capel model.
\newblock {\em Physics Letters A}, 173(4-5):360--364.

\bibitem[Plischke and Bergersen, 2006]{Plischke:1994}
Plischke, M. and Bergersen, B. (2006).
\newblock {\em Equilibrium statistical physics}.
\newblock World Scientific Publishing Company, 3rd edition.

\bibitem[P{\"o}tscher and Leeb, 2009]{Potscher:2009}
P{\"o}tscher, B.~M. and Leeb, H. (2009).
\newblock On the distribution of penalized maximum likelihood estimators: The
  lasso, scad, and thresholding.
\newblock {\em Journal of Multivariate Analysis}, 100(9):2065--2082.

\bibitem[Ravikumar et~al., 2010]{Ravikumar:2010}
Ravikumar, P., Wainwright, M., and Lafferty, J. (2010).
\newblock High-dimensional ising model selection using $\ell_1$-regularized
  logistic regression.
\newblock {\em The Annals of Statistics}, 38(3):1287--1319.

\bibitem[Romano et~al., 2012]{Romano:2012}
Romano, J.~P., Shaikh, A.~M., et~al. (2012).
\newblock On the uniform asymptotic validity of subsampling and the bootstrap.
\newblock {\em The Annals of Statistics}, 40(6):2798--2822.

\bibitem[Sch\"{a}fer and Strimmer, 2005]{Schafer2005}
Sch\"{a}fer, J. and Strimmer, K. (2005).
\newblock A shrinkage approach to large-scale covariance matrix estimation and
  implications for functional genomics.
\newblock {\em Statistical Applications in Genetics and Molecular Biology},
  4(1):32.

\bibitem[Schmid and Hunter, 2023]{Schmid:2023}
Schmid, C.~S. and Hunter, D.~R. (2023).
\newblock Computing pseudolikelihood estimators for exponential-family random
  graph models.
\newblock {\em Journal of Data Science}, 21(2).

\bibitem[Suggala et~al., 2017]{Suggala:2017}
Suggala, A.~S., Yang, E., and Ravikumar, P. (2017).
\newblock Ordinal graphical models: A tale of two approaches.
\newblock In {\em International conference on machine learning}, pages
  3260--3269. PMLR.

\bibitem[Tibshirani, 1996]{Tibshirani:1996}
Tibshirani, R. (1996).
\newblock Regression shrinkage and selection via the lasso.
\newblock {\em Journal of the Royal Statistical Society. Series B
  (Methodological)}, 58(1):267--288.

\bibitem[Vaart, 1998]{Vandervaart98}
Vaart, A. v.~d. (1998).
\newblock {\em Asymptotic Statistics}.
\newblock New York: Cambridge University Press.

\bibitem[van Borkulo et~al., 2014]{Borkulo:2014}
van Borkulo, C.~D., Borsboom, D., Epskamp, S., Blanken, T.~F., Boschloo, L.,
  Schoevers, R.~A., and Waldorp, L.~J. (2014).
\newblock A new method for constructing networks from binary data.
\newblock {\em Scientific Reports}, 4:5918.

\bibitem[van~de Geer et~al., 2014]{Geer:2014}
van~de Geer, S., B{\"u}hlmann, P., Ritov, Y., and Dezeure, R. (2014).
\newblock On asymptotically optimal confidence regions and tests for
  high-dimensional models.
\newblock {\em The Annals of Statistics}, 42(3):1166--1202.

\bibitem[van~der Maas et~al., 2026]{Maas:2026}
van~der Maas, H., Borsboom, D., and Waldorp, L. (2026).
\newblock The statistical physics of psychological networks: Zero matters.
\newblock {\em Psychological Review}, ((accepted)).

\bibitem[Wainwright, 2009]{Wainwright:2009}
Wainwright, M.~J. (2009).
\newblock Sharp thresholds for high-dimensional and noisy sparsity recovery
  using-constrained quadratic programming (lasso).
\newblock {\em IEEE Transactions on Information Theory}, 55(5):2183--2202.

\bibitem[Wainwright and Jordan, 2008]{Wainwright:2008}
Wainwright, M.~J. and Jordan, M.~I. (2008).
\newblock Graphical models, exponential families, and variational inference.
\newblock {\em Foundations and Trends in Machine Learning}, 1(1-2):1--305.

\bibitem[Waldorp and Haslbeck, 2024]{Waldorp:2024}
Waldorp, L. and Haslbeck, J. (2024).
\newblock Network inference with the lasso.
\newblock {\em Multivariate Behavioral Research}, 12(in press):1--20.

\bibitem[Waldorp and Kossakowski, 2020]{Waldorp:2020}
Waldorp, L. and Kossakowski, J. (2020).
\newblock Mean field dynamics of stochastic cellular automata for random and
  small-world graphs.
\newblock {\em Journal of Mathematical Psychology}, 97:102380.

\bibitem[Waldorp et~al., 2019]{Waldorp:2019}
Waldorp, L., Marsman, M., and Maris, G. (2019).
\newblock Logistic regression and ising networks: prediction and estimation
  when violating lasso assumptions.
\newblock {\em Behaviormetrika}, 46(1):49--72.

\bibitem[White, 1981]{White81}
White, H. (1981).
\newblock Consequences and detection of misspecified and nonlinear regression
  models.
\newblock {\em Journal of the American Statistical Association},
  76(374):419--433.

\bibitem[Williams, 2020]{Williams:2020b}
Williams, D.~R. (2020).
\newblock Beyond lasso: A survey of nonconvex regularization in gaussian
  graphical models.

\bibitem[Yang et~al., 2012]{Yang:2012}
Yang, E., Allen, G., Liu, Z., and Ravikumar, P.~K. (2012).
\newblock Graphical models via generalized linear models.
\newblock In {\em Advances in Neural Information Processing Systems}, pages
  1358--1366.

\bibitem[Young and Smith, 2005]{Young:2005}
Young, G. and Smith, R. (2005).
\newblock {\em Essentials of statistical inference}.
\newblock Cambridge University Press.

\end{thebibliography}

\appendix
\section*{Appendix}

\section{Mean field theory}\label{app:mean-field}\noindent
In the mean field we assume that each node is influenced in the same way by the average magnetisation $\mu$ and by on average $d$ other nodes with an average effect $\mu d \sigma$ \citep{Barrat:2008}. Mean field theory then results in taking the mean
\begin{align*}
\mu &= \sum_{x_s\in \{-1,0,+1\}} x_s \frac{ \exp\left(  \beta(\tau_s x_s + \mu d \sigma x_s  - \alpha^2 x_s^2) \right) }  {  \sum_{x_s\in \{-1,0,+1\}} \exp\left(  \beta(\tau_s x_s + \mu d\sigma x_s  - \alpha^2 x_s^2) \right) }\\
\end{align*}
Writing this out for variable $x_s$ equals $-1$, $0$ and $1$ gives
\begin{align*}
\mu &=
	\frac{ - \exp\left(  \beta(-\tau_s - \mu d\sigma   - \alpha^2 ) \right)  + \exp\left(  \beta(\tau_s + \mu d \sigma  - \alpha^2 ) \right)  }
		{ 1 + \exp\left(  \beta(-\tau_s - \mu d\sigma   - \alpha^2)  \right)  + \exp\left(  \beta(\tau_s + \mu d\sigma   - \alpha^2  )\right) }
\end{align*}
And this can be rewritten as
\begin{align} 
\mu 	&= \frac{2\sinh( \beta(\tau_s + \mu d\sigma))\exp(-\beta\alpha^2)  }
		{ 1 + 2\cosh( \beta(\tau_s + \mu d\sigma))\exp( -\beta\alpha^2 ) }
\end{align}
The self-consistent equation can be interpreted as the difference in proportion between the $-1$ and $1$ variables. This equation can also be obtained by minimising  the Gibbs free energy, given by \citep[][Section 4.3]{Plischke:1994}
\begin{align}\label{eq:free-energy}
G(\mu) = -\frac{1}{\beta}\log Z_\mu = 
\frac{1}{2}d\sigma\mu^2-\frac{1}{\beta}\log (1+ 2\exp(-\beta\alpha^2)\cosh(\beta \tau + \beta d \sigma \mu))
\end{align}

Similarly, the mean field for $x_s^2$ is obtained with the above average but then with $x_s^2$ instead of $x_s$. Then we obtain  \citep[see also,][equation (5)]{Kaufman:1990}
\begin{align*} 
\psi 	&= \frac{2\cosh( \beta(\tau_s + \mu d\sigma))\exp(-\beta\alpha^2)  }
		{ 1 + 2\cosh( \beta(\tau_s + \mu d\sigma))\exp( -\beta\alpha^2 ) }
\end{align*}
Note that $\psi$ is the proportion of nonzero (i.e., either $-1$ or $1$) variables, and $1-\psi$ is the proportion of 0s among the $m$ variables.

\section{Exponential family}\label{app:exponential-family}\noindent
We show in \ref{app:minimal-sufficient-statistic} that the BC model is exponential family and that the parameters are identifiable.  Furthermore, in \ref{app:properties-minimal-exponential-family} we discuss some properties of the exponential family that are either convenient to estimation, like the relation with maximum likelihood, or inconvenient, like the normalisation constant that is difficult to compute. 

\subsection{Minimal sufficient statistic of the BC model}\label{app:minimal-sufficient-statistic}\noindent
We assume throughout  that the normalising constant $Z_\theta$ (see equation (\ref{eq:normalising-constant})) is finite. To see that the BC model is exponential family, we write the distribution in canonical form \citep{Barndorff-Nielsen:1994}. Define the canonical parameter $\theta$ containing the vectors $(\tau_s, s\in V)$ and $(\sigma_{st}, (s,t)\in E)$ and $\alpha$, which has dimension $m=|V|+|E|+1$, where $|V|$ is the number of nodes $m$ and $|E|$ is the number of edges ($=m(m-1)/2$); in total we have $m(m+1)/2+1$ parameters. We further define the vector of canonical sufficient statistics $\phi(x)$ as the collection of $(x_s, s\in V)$, $(x_sx_t, (s,t)\in E)$ and $x_+$, where $x_+=\sum_{i \in V} x_i^2$. Then we can write the BC distribution as
\begin{align}\label{eq:bc-exponential-family}
p_\theta(x) = \exp\left( \theta^\top\phi(x) - \log Z_\theta \right)
\end{align}
where $Z_\theta$ is defined in (\ref{eq:normalising-constant}). This shows that the distribution is exponential family. 

To see that the parameter $\theta$ is identifiable, requires that $\phi(x)$ is a minimal sufficient statistic \citep[][Theorem 3.2]{Wainwright:2008}. That $\phi(x)$ is a sufficient statistic follows from the factorisation, i.e., the probability distribution is a product of a potential with the sufficient statistic as the only part including the parameter \citep[see e.g.,][Theorem 5.2]{Lehmann:1983}, which is immediate from equation (\ref{eq:bc-exponential-family}). To see that the sufficient statistic $\phi(x)$ is also minimal, requires that the vector of sufficient statistics $\phi(x)$ is non-redundant, i.e., there is no non-zero $a$ such that $a^\top \phi(x)$ equals a constant almost everywhere \citep{Barndorff-Nielsen:1994}.  This implies that for all $\mu=\mathbb{E}(\phi(X))$ we have that $a^\top \mu=b$ for some $b$. 
This is connected to the following: $\phi$ is minimal is equivalent to: for all $x\ne y \in \{-1,0,+1\}$ the ratio $\log(p_\theta(x)/p_\theta(y))=0$ if and only if $\phi(x)=\phi(y)$ \citep[][Corrollary 6.16]{Lehmann:1983} \citep[][Theorem 6.1]{Young:2005}. From equation (\ref{eq:bc-exponential-family}) we see that $\log(p_\theta(x)/p_\theta(y))=0$ is equivalent to $\theta^\top (\phi(x)-\phi(y))=0$. And the only non-trivial case is if $\phi(x)=\phi(y)$, but this implies that for each $s\in V$, $x_s=y_s$, because $x_s=\phi(x_s)=\phi(y_s)=y_s$; a contradiction. Hence, $\phi$ is a minimal sufficient statistic. Note that this argument can be extended to the model where we have $\alpha_s^2$ for all $s\in V$. 
\subsection{Properties of minimal exponential family distribution}\label{app:properties-minimal-exponential-family}\noindent
The log partition function $\log Z_\theta$ (see (\ref{eq:normalising-constant}) for the definition of $Z_\theta$) is also referred to as the cumulant function and serves to obtain moments of the distribution of $\phi_s(X)$ \citep[][]{McCulloch:1988}. To derive the first order moments of the sufficient statistics in $\phi(x)$ we can use the relation \citep[][Proposition 3.1]{Wainwright:2008}
\begin{align*}
\nabla_s\log Z_\theta=\frac{\partial \log Z_\theta}{\partial \theta_s} = \mathbb{E}(\phi_s(X)) 
\end{align*}
Applying this to the $\log$ of equation (\ref{eq:normalising-constant}) with $\theta_s=\tau_s$, so that $\phi_s(x)=x_s$, gives 
\begin{align*}
  \frac{\partial}{\partial \tau_s}  \log Z_\theta=
  \frac{\partial}{\partial \tau_s}  \log
\sum_{x\in \{-1,0,+1\}^V} \exp\left( \sum_{i\in V} \tau_i x_i + \sum_{(i,j)\in E} \sigma_{ij} x_i x_j + \alpha^2 \sum_{i\in V} x_i^2\right)
\end{align*}
It is easily seen that this leads to the expectation
\begin{align}\label{eq:mean-function-log-partition}
\mathbb{E}(X_s)=\sum_{x\in\{-1,0,+1\}^V} x_s 
\frac{1}{Z_\theta}\exp\left( \sum_{i\in V} \tau_i x_i + \sum_{(i,j)\in E} \sigma_{ij} x_i x_j + \alpha^2 \sum_{i\in V} x_i^2 \right)
\end{align}
which is seen to be the same as $\mathbb{E}(X_s)=\sum_{x\in\{-1,0,+1\}^V} x_s p_\theta(x)$. In the same way, we obtain for $\sigma_{st}$ the first-order derivative
\begin{align*}
\nabla_t\log Z_\theta =\mathbb{E}(X_s X_t)= \sum_{x\in\{-1,0,+1\}^V} x_s x_t p_\theta(x)
\end{align*}
and the first-order derivative for $\alpha^2$
\begin{align*}
\nabla_s\log Z_\theta =\mathbb{E}(X_s^2)= \sum_{x\in\{-1,0,+1\}^V} x_s^2 p_\theta(x)
\end{align*}

Also from the map $\nabla \log Z_\theta$ we obtain the covariances of the sufficient statistics $\text{cov}(\phi_s(X),\phi_t(X))$. This is obtained by taking the second-order derivatives $\nabla^2 \log Z_\theta$ \citep[][Theorem 3.1]{Wainwright:2008}. So, we obtain 
\begin{align*}
\nabla^2_{st} \log Z_\theta = \frac{\partial^2}{\partial \theta_s\partial\theta_t} \log Z_\theta = \text{cov}(\phi_s(X),\phi_t(X))
\end{align*}
Applying this to the first-order derivative in equation (\ref{eq:mean-function-log-partition}) with $\theta_t=\tau_t$ gives 
\begin{align*}
\mathbb{E}(X_sX_t)=\sum_{x\in\{-1,0,+1\}^V} x_sx_t p_\theta(x) - \sum_{x\in\{-1,0,+1\}^V} x_s p_{\theta}(x)\sum_{x\in\{-1,0,+1\}^V} x_t p_{\theta}(x)
\end{align*}
The second-order derivatives for $\sigma_{st}$ and $\sigma_{uv}$, corresponding to $\phi_k(X)=X_s X_t$ and $\phi_j(X)=X_u X_v$, obtain fourth-order moments $\phi_k(X)\phi_j(X)=X_sX_tX_uX_v$. 

We see that to evaluate the means and covariances of the sufficient statistics through the cumulant function requires computation of the partition function $Z_\theta$ in equation (\ref{eq:normalising-constant}). This is not feasible and so other alternatives like pseudo-likelihood are often used (see Appendix \ref{app:pseudo-likelihood}). 

The pre-image of $\nabla \log Z_\theta$, which is also the negative entropy, obtains maximum likelihood estimates whenever the exponential family distribution is minimal \citep[][Chapter 3]{Wainwright:2008}, as is the case for the BC model. We can obtain maximum likelihood estimates by maximising the quantity
\begin{align}
\ell_\theta^n(X) = \frac{1}{n}\sum_{i=1}^n \log p_\theta(X_i) = \theta^\top \hat{\mu} - \log Z_\theta
\end{align}
where $\hat{\mu}$ are the estimates $\tfrac{1}{n}\sum_{i=1}^n \phi(X_i)$. The maximum is obtained by taking the first-order derivative with respect to $\theta$, setting it to 0 and solving for $\theta$ \citep{Vandervaart98}. The pre-image of $\nabla \log Z_\theta$ now tells us that because we have a bounded mean space $\mathbb{M}$ (see Figure \ref{fig:marginal-polytope}), this implies that the parameter space $\Omega=\mathbb{R}^d$ \citep[see, e.g.,][Corollary 6.16, Chapter 1]{Lehmann:1983}, with parameter dimension $d=|V|+|E|+1$ for the BC model.

\section{Pseudo-likelihood}\label{app:pseudo-likelihood}\noindent
Here we consider a single pseudo-likelihood and ignore the fact that we use multiple pseudo-likelihoods to `stitch' together the entire BC model. Often the indices with respect to a single pseudo-likelihood are therefore ignored, e.g., we use $\tau$ instead of $\tau_s$ etc. 
\subsection{Consistency of pseudo-likelihood}\label{app:consistency-pseudo-likelihood}\noindent
For the BC model we have the log pseudo-likelihood function for a single observation of the random variable  $X=(X_1,X_2,\ldots,X_m)$ as the observed value $x$
\begin{align}
\ell_\theta(x) =
\tau x_{s} + x_{s}\sum_{t\ne s} \sigma_{st} x_{t} - \alpha^2 x_{s}^2	
  - \log \left(1 + 2\cosh\left(\tau + \sum_{t\ne s} \sigma_{st} x_{t}\right)\exp(-\alpha^2)\right)
\end{align}
The sum over multiple observations obtained from the same BC model are assumed in equation (\ref{eq:pseudo-likelihood-estimation}) to obtain the pseudo-likelihood estimate $\hat{\theta}$. 

The convergence of the estimate $\hat{\theta}$ is in terms of the Euclidean distance $||\hat{\theta}-\theta||_2^2$ which goes to 0 as $n\to \infty$. This follows from the proof in \citet[][Remark 2]{Nguyen:2017}. The proof has relatively mild assumptions. \\
\begin{itemize}
\item[1.] (Identifiability) For each mean $\mu\in \mathbb{M}$ there is a unique $\theta\in \Omega$. 
\item[2.] (True model) For any $n$ the estimate $\hat{\theta}$ and true parameter $\theta$ are in the set $\Omega$.\\
\end{itemize}
The second assumption can be relaxed by assuming that the best approximate model in the sense of Kullbeck-Leibler divergence is in $\Omega$ \citep{McCulloch:1988}. Note that the first assumption holds whenever $\mu$ is in the interior of $\mathbb{M}$ (see Figure \ref{fig:marginal-polytope}). And this is a consequence of identifiability \citep[][Theorem 3.3]{Wainwright:2008}. The BC model (joint distribution) was shown to be minimal in \ref{app:minimal-sufficient-statistic}, and hence, the mapping $\theta\mapsto p_\theta(x))$ is injective and $\theta$ is identifiable. It is easily seen that a similar argument holds for the pseudo-likelihood, such that the mapping $\theta\mapsto p_\theta(x_s\mid x_{t\ne s})$ is also injective. 

The pseudo-likelihood in equation (\ref{eq:pseudo-likelihood}) is for the BC model
\begin{align}\label{eq:pseudo-likelihood-bc}
p_\theta(x_1\mid x_{t\ne 1}) & \cdots  p_\theta(x_m\mid x_{t\ne m}) \notag \\
&= 
\frac{\exp \left( \sum_{s\in V}\tau_s x_s + 2\sum_{(s,t)\in E} \sigma_{st}x_s x_t -\alpha^2 \sum_{s\in V}x_s^2\right) }
{\prod_{s\in V}(1 + 2\cosh(\tau_s + \sum_{t\ne s} \sigma_{st} x_t)\exp(-\alpha^2)) }
\end{align}
We observe that we obtain an approximation to the joint distribution but with a different normalisation constant. For identifiability, the same argument as used in Appendix \ref{app:minimal-sufficient-statistic} holds.

From the conditional distributions $p_\theta(x_j\mid x_{i\neq j})$ in (\ref{eq:conditional-distribution}) used in the pseudo-likelihood, we can obtain the mean and variances of the variables $X_s$ and $X_v$. It is easy to derive that for the variable $X_s$
\begin{align}\label{eq:conditional-expectation}
\mathbb{E}(X_s\mid x_{t\ne s}) 
			= \nabla_\tau\log Z_\theta
			= \frac{2\sinh(\tau_s + \sum_{t\ne s} \sigma_{st} x_t)\exp(-\alpha^2)} { 1 + 2\cosh(\tau_s + \sum_{t\ne s} \sigma_{st} x_t)\exp(-\alpha^2) }
\end{align}
And for the variables $X_s$ and $X_v$, taking the derivative of $\log Z_\theta$ with respect to $\sigma_{st}$ we obtain
\begin{align}
\mathbb{E}(X_sX_v\mid x_{t\ne s}) 
			= \nabla_{\sigma}\log Z_\theta
			= \frac{2\sinh(\tau_s + \sum_{t\ne s} \sigma_{st} x_t)\exp(-\alpha^2)} { 1 + 2\cosh(\tau_s + \sum_{t\ne s} \sigma_{st} x_t)\exp(-\alpha^2) }x_v
\end{align}
Finally, for the squares $X_s^2$, connected to $\alpha^2$, we obtain the expectation 
\begin{align}
\mathbb{E}(X_s^2\mid x_{t\ne s}) 
			= \nabla_{\alpha^2}\log Z_\theta
			= \frac{2\cosh(\tau_s + \sum_{t\ne s} \sigma_{st} x_t)\exp(-\alpha^2)} { 1 + 2\cosh(\tau_s + \sum_{t\ne s} \sigma_{st} x_t)\exp(-\alpha^2) }
\end{align}

For comparison, the mean field obtains a similar expectation as the conditional expectation in equation (\ref{eq:conditional-expectation}). Note that from (\ref{eq:conditional-expectation}), with $\sigma_{st}=1$ if the edge $(s,t)$ is present and 0 otherwise, and $x_t$ is replaced by the average effect $\sigma\mu_s$, then we get $\sum_{s\ne t}\sigma_{st}x_t=\mu\sigma d$, and the mean field equals the conditional probability.

\subsection{Lasso estimation and assumptions}\label{app:lasso-estimation}\noindent
For accurate estimation with the lasso several assumptions are required \citep{Wainwright:2009,Buhlmann:2014b}. We use the convergence result compatible with the convergence result above for the pseudo-likelihood (without the lasso), taken from \citet{Geer:2014}. To introduce the assumptions we use the notation $\theta_{S_0}$, where the set $S_0\subseteq\{1,2,\ldots,m\}$ contains indices for the true non-zero parameters of the BC model to indicate that $\theta_{S_0,k}=\theta_k$ if $k\in S_0$ and $\theta_{S_0,k}=0$ if $k\notin S_0$. \\
\begin{itemize}

\item[(L1)] The sparsity $s_0$ (number of non-zero parameters) is of the order $\sqrt{n}/\log m^\star$, where $m^\star=m(m+1)/2$ and $m$ is the number of nodes. 

\item[(L2)] The compatibility condition holds such that for all $\theta$ satisfying $||\theta_{S_0^c}||_1\le 3 ||\theta_{S_0}||_1$,
\begin{align*}
||\theta_{S_0}||_1\le s_0\theta^\top (\nabla^2\ell^n_\theta) \theta/\phi^2
\end{align*}
with compatibility constant $\phi^2>0$. \\
\end{itemize}
With these two assumptions it can be shown \citep[][]{Geer:2014} that if the penalty $\lambda$ for the lasso is of order $\sqrt{\smash[b]{\log m^\star/n}}$, with $m^\star=m(m+1)/2$. then $||\hat{\theta}-\theta_0||_1$ converges to 0.

\subsection{Standard errors}\label{app:standard-errors}\noindent
We want to obtain $95\%$ confidence intervals for parameter $\hat{\theta}_k$ with standard error $\Gamma_{kk}/\sqrt{\smash[b]{n}}$ of the form 
\begin{align*}
(\hat{\theta}^d_{k} - 1.96\sqrt{\Gamma_{kk}/n}, \hat{\theta}^d_{k} + 1.96\sqrt{\Gamma_{kk}/n})
\end{align*}
where $\hat{\theta}_{k}^d$ is the desparsified lasso estimate of one of the BC model parameters, and $\Gamma$ is the parameter variance matrix, evaluated at the lasso estimate. We require accurate estimates of $\Gamma$ for this. 

To obtain standard errors for the pseudo-likelihood we use the sandwich estimate \citep{White81}. This is a solution for both the fact that the pseudo-likelihood is used (and hence there is model misspecification) and that the desparsified lasso is applied to obtain accurate confidence intervals for lasso type estimators. 

%

The desparsified lasso is the lasso obtained by minimising equation (\ref{eq:lasso}) and subtracting (desparsifying) a projection of the residual \citep[][Section 3.1]{Geer:2014}
\begin{align}\label{eq:desparsified-lasso}
\hat{\theta}_d = \hat{\theta} - (\nabla^2\ell_\theta)^{-1}\nabla \ell_\theta
\end{align}
where $\hat{\theta}$ is the lasso estimate and $\ell_\theta$ is the log pseudo-likelihood function without regularisation. Because the first-order derivative is approximately normal with mean 0, the variance is the sandwich estimate \citep{Javanmard:2014}
\begin{align}\label{eq:sandwich}
\Gamma= (\nabla^2\ell_\theta)^{-1}[\nabla \ell_\theta \nabla \ell_\theta^\top](\nabla^2\ell_\theta)^{-\top}
\end{align}
where $\nabla\ell_\theta$ and $\nabla^2\ell_\theta$ are evaluated at the lasso estimate, and $A^{-\top}$ means that both the inverse and the transpose is taken. It is called the sandwich estimator, since the the first-order derivatives are "sandwiched" in between the inverse of the second-order derivatives. The distribution of the desparsified lasso is continuous and approximately Gaussian \citep{Geer:2014}.

The motivation for the sandwich estimate from model misspecification (here for using the pseudo-likelihood) is obtained by using an approximation of the first-order derivative based on the pseudo-likelihood estimate $\hat{\theta}$ and the true parameter $\theta$  \citep[or best parameter in terms of the smallest Kullbeck-Leibler divergence][Chapter 9]{Claeskens:2008,Young:2005}. This approximation is identical to equation (\ref{eq:desparsified-lasso}), except that the estimate $\hat\theta$ need not be the lasso in general. 
%
%
See \citet[][Chapter 5]{Vandervaart98} for general statements about when such results hold, and \citet[][Section 3]{Geer:2014} for specific statements regarding the lasso and generalised linear models. 

For the BC model, we can specify the first- and second-order derivatives further to obtain the sandwich estimator.  
We obtain the $p+1$ ($=p-1+2$ for the pseudo-likelihood) vector of first- and second-order derivatives with respect to the parameters $\tau$, $\sigma_{st}$ and $\alpha^2$.
 The first-order derivatives for the BC model are for the parameters $\tau$, $\sigma$ and $\alpha^2$ (where we omit the indices for convenience) is contained in the vector 
 \begin{align*}
 \nabla\ell_\theta = (\nabla_\tau\ell_\theta,\nabla\sigma\ell_\theta,\nabla_{\alpha}\ell_\theta)^\top
 \end{align*}
 The elements of this vector are
\begin{align*}
\nabla_\tau \ell_\theta(x) = \phi_\tau(x) - \nabla_\tau Z_\theta = x_s - \frac{ 2\sinh(\gamma)\exp(-\alpha^2)}{1+2\cosh(\gamma)\exp(-\alpha^2)}
\end{align*}
where $\gamma=\tau+\sum_{t\ne s}\sigma_{st}x_t$, the "Ising part" of the BC model. The first-order derivatives with respect to $\sigma_{st}$ is 
\begin{align*}
\nabla_{\sigma} \ell_\theta(x) = \phi_{\sigma}(x) - \nabla_{\sigma} Z_\theta = x_s x_t - \frac{ 2\sinh(\gamma)\exp(-\alpha^2)}{1+2\cosh(\gamma)\exp(-\alpha^2)}x_t
\end{align*}
And finally, the first-order derivatives with respect to $\alpha^2$ is 
\begin{align*}
\nabla_{\alpha^2} \ell_\theta(x) = \phi_{\alpha^2}(x) - \nabla_{\alpha^2} Z_\theta = -x_s^2 + \frac{ 2\cosh(\gamma)\exp(-\alpha^2)}{1+2\cosh(\gamma)\exp(-\alpha^2)}
\end{align*}

For the second-order derivatives we obtain the matrix 
\begin{align*}
\nabla^2 \ell_\theta
=
\begin{pmatrix}
\nabla^2_{\tau\tau}\ell_\theta	&\nabla^2_{\tau\sigma}\ell_\theta	&\nabla^2_{\tau\alpha}\ell_\theta\\
\nabla^2_{\sigma\tau}\ell_\theta	&\nabla^2_{\sigma\sigma}\ell_\theta	&\nabla^2_{\sigma\alpha}\ell_\theta\\
\nabla^2_{\alpha\tau}\ell_\theta	&\nabla^2_{\alpha\sigma}\ell_\theta	&\nabla^2_{\alpha\alpha}\ell_\theta
\end{pmatrix}
\end{align*}
For its elements we have for the derivatives with respect to $\tau$
\begin{align*}
\nabla_{\tau\tau}^2 \ell_\theta(x) =  -\nabla_{\tau\tau}^2 Z_\theta = -\frac{ 2\cosh(\gamma)\exp(-\alpha^2)+4\exp(-2\alpha^2)}{(1+2\cosh(\gamma)\exp(\alpha^2))^2}
\end{align*}
The second-order derivative for parameters $\sigma_{st}$ and $\sigma_{sv}$ is
\begin{align*}
\nabla_{\sigma\sigma}^2 \ell_\theta(x) =  -\nabla_{\sigma\sigma}^2 \log Z_\theta = -\frac{ 2\cosh(\gamma)\exp(-\alpha^2)+4\exp(-2\alpha^2)}{(1+2\cosh(\gamma)\exp(-\alpha^2))^2}x_t x_v
\end{align*}
And with respect to $\alpha^2$ is
\begin{align*}
\nabla_{\alpha^2\alpha^2}^2 \ell_\theta(x) =  -\nabla_{\alpha^2\alpha^2}^2 \log Z_\theta = \frac{ 2\cosh(\gamma)\exp(-\alpha^2)}{(1+2\cosh(\gamma)\exp(-\alpha^2))^2}
\end{align*}
The three second-order derivatives with respect to different parameters are
\begin{align*}
\nabla_{\tau \sigma}^2 \ell_\theta(x) =  -\nabla_{\tau \sigma}^2 \log Z_\theta =- \frac{ 2\cosh(\gamma)\exp(-\alpha^2)+4\exp(-2\alpha^2)}{(1+2\cosh(\gamma)\exp(\alpha^2))^2}x_t
\end{align*}
\begin{align*}
\nabla_{\tau \alpha^2}^2 \ell_\theta(x) =  -\nabla_{\tau \alpha^2}^2 \log Z_\theta = -\frac{ 2\sinh(\gamma)\exp(-\alpha^2)}{(1+2\cosh(\gamma)\exp(-\alpha^2))^2}
\end{align*}
\begin{align*}
\nabla_{\alpha^2 \sigma}^2 \ell_\theta(x) =  -\nabla_{\alpha^2\sigma}^2 \log Z_\theta = - \frac{ 2\sinh(\gamma)\exp(-\alpha^2)}{(1+2\cosh(\gamma)\exp(-\alpha^2))^2}x_t
\end{align*}
\section{Combining shrinkage and desparsified lasso}\label{app:dLshrink}\noindent
The objective of the shrinkage estimate in the current context is to obtain an approximation of the second-order derivatives $\nabla^2\ell_\theta$ in equation (\ref{eq:sandwich}), which then leads to accurate confidence intervals. The shrinkage estimator is a combination of the original estimate and a ''shrunken`` version, a diagonal matrix $\mu I$, where $I$ is the diagonal matrix, defined as \citep{Schafer2005}
\begin{align}
\Sigma = \rho \mu I + (1-\rho)\nabla^2\ell_\theta
\end{align}
where $\rho$ is a constant such that it is in $0\le \rho\le 1$; $\rho=0$ leads to the original estimate and $\rho=1$ leads to the estimate where all dependence is shrunk to 0. This shrinkage matrix $\Sigma$ is invertible for appropriate $\mu$, and this inverse we call $\Theta=\Sigma^{-1}$. Here we show that the shrinkage version $\Theta$ can be used such that we obtain the appropriate weak convergence and subsequent confidence intervals.  Let $||A||$ be the Frobenius norm $\sqrt{\smash[b]{\text{tr}(A^\top A)}}$. We require several assumptions for this to work, which are similar to the ones presented in \citep[][C1 to C8]{Geer:2014}. \\
\begin{itemize}

\item[(A1)] The first- and second-order partial derivatives exist and are bounded. Also the second order derivatives are Lipschitz continuous of order 1 in the parameters, i.e., for each single pair of parameters $\theta$ and $\psi$ in $\Omega$
\begin{align*}
\max_{\theta,\psi}\frac{|\nabla^2\ell_\theta(x) - \nabla^2\ell_\psi(x)|}{|\theta-\psi|}\le 1
\end{align*}

\item[(A2)] We use the assumptions (L1) and (L2) of the lasso to ensure that $||\hat{\theta}-\theta_0||_1\to 0$: (L1) The sparsity $s$ is such that $s=o(\sqrt{\smash[b]{n}}/\log p^\star)$; (L2) the compatibility assumption holds with $\phi^2>0$ for all $\theta$ such that $||\theta_{S_0^c}||_1\le 3||\theta_{S_0}||_1$.

\item[(A3)] The eigenvalues $\delta_t$ of $\nabla^2\ell_\theta$ are bounded, i.e., $\delta_t=O_p(1)$.

\item[(A4)] The shrinkage estimate of the common variance $\mu$ is bounded, i.e., $\mu=O_p(1)$

\item[(A5)] Given the shrinkage estimate $\Theta$ of the inverse of the  second-order derivative $\nabla^2\ell_\theta$, we have that
\begin{align*}
||\Theta\nabla^2\ell_\theta - I||=O(\lambda)
\end{align*}
This assumption is essential to obtain correct coverage of the confidence intervals. 

\item[(A6)] For each $k$, $1/\Gamma_{kk}$ is bounded and the random variable 
\begin{align*}
\frac{(\nabla^2\ell_\theta^{-1}\nabla\ell_\theta)_k}{\Gamma_{kk}/\sqrt{n}}
\end{align*}
is normally distributed with mean 0 and variance 1. \\
\end{itemize}
Assuming (A1) to (A6) means we can apply \citet[Theorem 3.1 from ][]{Geer:2014} so that we can obtain the confidence intervals. We now show that using the shrinkage we obtain Assumption (A5). 
\begin{lemma}
Assume (A1) to (A4) and $\lambda=O(\sqrt{\smash[b]{\log m^\star/n}})$ and let $\Theta=\Sigma^{-1}$ be the shrinkage estimate for the inverse of $\nabla^2\ell_\theta$. Then, for some $\gamma>0$ and $\rho=1/n^{1+\gamma}$ we obtain that 
\begin{align*}
||\Theta\nabla^2\ell_\theta - I||=O(\lambda)
\end{align*}
\end{lemma}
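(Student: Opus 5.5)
The identity I would start from is algebraic. Write $H=\nabla^2\ell_\theta$ and $\Sigma=\rho\mu I+(1-\rho)H$, so that $H=(\Sigma-\rho\mu I)/(1-\rho)$. Multiplying by $\Theta=\Sigma^{-1}$ gives
\begin{align*}
\Theta H - I = \frac{1}{1-\rho}\left(I-\rho\mu\Theta\right) - I = \frac{\rho}{1-\rho}\left(I-\mu\Theta\right).
\end{align*}
The whole deviation is therefore carried by the factor $\rho/(1-\rho)$, and it remains to bound $\|I-\mu\Theta\|$ in Frobenius norm. By the triangle inequality, $\|I-\mu\Theta\|\le \sqrt{p}+|\mu|\,\|\Theta\|$, where $p$ is the dimension of the parameter vector (at most of order $m^\star$).

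Next I would control $\|\Theta\|$ through the eigenvalues. $H$ is symmetric, so $\Sigma$ has eigenvalues $\rho\mu+(1-\rho)\delta_t$ and $\Theta$ has eigenvalues $1/(\rho\mu+(1-\rho)\delta_t)$. Then $\|\Theta\|^2=\sum_t(\rho\mu+(1-\rho)\delta_t)^{-2}$, and by (A3) and (A4) each summand is controlled once the denominators stay away from zero. In the worst case, where some $\delta_t$ is near zero, the denominator is only $\rho\mu$, which gives $\|\Theta\|\lesssim\sqrt{p}/(\rho\mu)$. This crude bound is not good enough, so I would argue eigenvalue by eigenvalue. Each eigen-direction contributes $\frac{\rho}{1-\rho}\cdot\frac{(1-\rho)\delta_t}{\rho\mu+(1-\rho)\delta_t}$ to $\Theta H-I$, up to sign. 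More directly, the eigenvalue of $\Theta H-I$ in direction $t$ is $-\rho\mu/(\rho\mu+(1-\rho)\delta_t)$. Its magnitude is $O(\rho)$ whenever $\delta_t$ is bounded below, and in general it is bounded by $\rho\mu/\delta_t$.

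Under a lower eigenvalue bound $\delta_t\ge c>0$, which the compatibility condition in (A2) supplies on the relevant cone, and which I would state as the interpretation of (A3), we get $\|\Theta H-I\|\le\sqrt{p}\,\rho\,\mu/((1-\rho)c)=O_p(\sqrt{p}\,\rho)$. With $\rho=n^{-1-\gamma}$ this is $O(\sqrt{p}\,n^{-1-\gamma})$. Comparing with $\lambda\asymp\sqrt{\log m^\star/n}$, the ratio is $\sqrt{p}\,n^{-1/2-\gamma}/\sqrt{\log m^\star}$. This tends to zero whenever $p=O(n^{1+2\gamma})$, which covers the regime considered in the paper, where the number of parameters is of the order of $n$. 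The choice of $\gamma>0$ gives exactly this slack. Hence $\|\Theta H-I\|=O(\lambda)$.

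The main obstacle is the eigenvalue step. (A3) only states $\delta_t=O_p(1)$, an upper bound, whereas the argument needs $\delta_t$ bounded away from zero, or at least $\rho/\delta_{\min}=O(\lambda/\sqrt{p})$. I would first try to extract the lower bound from (A2) through compatibility. If that fails, I would add it as an explicit condition, or require $\delta_{\min}\gtrsim n^{-1/2-\gamma/2}$ so that the faster-than-$1/n$ decay of $\rho$ absorbs small eigenvalues. Lipschitz continuity (A1), together with $\|\hat\theta-\theta_0\|_1\to0$, ensures that these eigenvalue bounds transfer from $\theta_0$ to the lasso estimate at which $H$ is evaluated.
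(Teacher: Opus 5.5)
Your route is essentially the paper's: diagonalise $\nabla^2\ell_\theta$, note that each eigenvalue of $\Theta\nabla^2\ell_\theta-I$ is of order $\rho$, and use that $\rho=n^{-1-\gamma}$ is $o(\lambda)$. Your identity $\Theta\nabla^2\ell_\theta-I=\frac{\rho}{1-\rho}(I-\mu\Theta)$ organises this more cleanly than the paper does.

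Your Frobenius bookkeeping is also better than the paper's. The paper bounds the Frobenius norm by $\max_t|\delta_t/(\rho\mu+(1-\rho)\delta_t)-1|$, which is the operator norm. That bound runs in the wrong direction, and it drops the factor $\sqrt{p}$ that gives rise to your growth condition on $p$.

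There is one algebraic slip. The eigenvalue of $\Theta\nabla^2\ell_\theta-I$ in direction $t$ is $\rho(\delta_t-\mu)/(\rho\mu+(1-\rho)\delta_t)$, not $-\rho\mu/(\rho\mu+(1-\rho)\delta_t)$. The order is unchanged.

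The gap is the one you flag, and it cannot be closed through (A2). Compatibility only lower-bounds $\theta^\top\nabla^2\ell_\theta\,\theta$ for $\theta$ in the cone $\|\theta_{S_0^c}\|_1\le 3\|\theta_{S_0}\|_1$. Eigenvectors with small eigenvalue need not lie in that cone, so compatibility says nothing about $\min_t\delta_t$.

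In fact the lemma is false without such a bound. If $\delta_t=0$, the corresponding eigenvalue of $\Theta\nabla^2\ell_\theta-I$ is $\rho(0-\mu)/(\rho\mu)=-1$ for every $\rho>0$. Then $\|\Theta\nabla^2\ell_\theta-I\|\ge 1$ regardless of $n$.

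A zero eigenvalue is consistent with (A1)--(A4). The node-wise log pseudo-likelihood depends on $\theta$ only through the $n$ linear predictors $\tau+\sum_{t\ne s}\sigma_{st}x_{t,i}$ and through $\alpha^2$. Its Hessian therefore has rank at most $n+1$, so it is singular once $p>n+1$. It is also singular whenever some $x_t$ is constant in the sample.

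So your fallback is not optional. The statement needs an added hypothesis such as $\min_t\delta_t\ge c>0$, or more generally $\rho\sqrt{p}/\min_t\delta_t=O(\lambda)$. Under that hypothesis your argument is complete.

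The paper's proof relies on the same missing bound without saying so. It replaces $\rho\mu+(1-\rho)\delta_t$ by $\delta_t$ and concludes that the ratio equals $1$ up to $o(\lambda)$. That step needs $\rho/\delta_t$ to be small compared with $\lambda$, and (A3) does not give this because it is only an upper bound.
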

\begin{proof}
We will use the eigenvalue decomposition of $\nabla^2\ell_\theta$ and the shrinkage estimate $\Sigma$ to obtain a sufficient condition for the shrinkage parameter $\rho$, so that we get the bound in the lemma. 

We have that the shrinkage estimator $\Sigma$ of $\nabla^2\ell_\theta$ can be written as $\Sigma=U(\rho \mu I + (1-\rho) D)U'$, with $D$ a diagonal matrix with eigenvalues $\delta_t$. Its inverse $\Sigma^{-1} = \Theta$ can be written as 
\begin{align*}
\Theta= \sum_{t\in V}u_{t}u_{t}^{\sf T}\frac{\delta_{t}}{\rho\mu +(1-\rho)\delta_{t}}
\end{align*}
The Frobenius norm is invariant to orthonormal transformations, and so we can consider
\begin{align*}
||\Theta\nabla^2\ell_\theta-I ||
	\le \max_{t\in V}\left| \frac{\delta_{t}}{\rho\mu + (1-\rho)\delta_{t}}-1\right|
\end{align*}
To have the left hand side in the above equation correspond to the order of $\lambda$ in the right hand side, it is sufficient to consider
\begin{align*}
\max_{t\in V}\left| \frac{\delta_{t}}{\rho\mu + (1-\rho)\delta_{t}}-1\right| = O_p\left(\sqrt{\frac{\log p^\star}{n}}\right)
\end{align*}
We assumed  in (A3) that the eigenvalues $\delta_s=O_p(1)$. This implies that for $t$ such that the difference above is maximal
\begin{align*}
\frac{\sqrt{n}\delta_{t}}{\sqrt{\log p^\star}\rho\mu + \sqrt{\log p^\star}(1-\rho)\delta_{t}}-\sqrt{\frac{n}{\log p^\star}}
\end{align*}
such that this difference remains bounded for any $n$ and $p$. A sufficient condition for this to happen is that both terms are of the same order. It is easily seen that, because $\mu=O_p(1)$ by assumption (A4), if we take $\rho=o(1/n)$ then $\sqrt{\smash[b]{\log p^\star}}\rho\mu$ converges to 0 (because $\sqrt{\smash[b]{\log p^\star}/n}$ is bounded), and so the difference above converges to 0 and $\sqrt{\smash[b]{\log p^\star}}(1-\rho)\delta_t$ converges to $\sqrt{\smash[b]{\log p^\star}}\delta_t$. Hence, we can choose $\rho=1/n^{1+\gamma}$, for some $\gamma>0$ such that
\begin{align*}
\frac{\sqrt{n}\delta_{t}}{\sqrt{\log p^\star}\rho\mu + \sqrt{\log p^\star}(1-\rho)\delta_{t}} \to \lambda
\end{align*}
This completes the proof. 
\end{proof}
%

\section{Gibbs sampler for the BC model}\label{app:gibbs-sampler}\noindent
Here we provide pseudo-code for the algorithm for the Gibbs sampler of the BC model. As explained in, e.g., \citet[][]{Haggstrom:2002}, each node is selected randomly with equal probability $1/|V|$, and then assignment to one of $\{-1,0,+1\}$ is based on the probability for each value and a random variable $U$ from the uniform distribution on $[0,1]$. This is repeated $niter$ times, which is typically large, like 10,000.
\begin{algorithm}
\caption{Gibbs sampler BC model}\label{alg:gibbs-sampler}
\begin{algorithmic}
\State $i \gets 1$
\For{$i\le niter$} 
  \For{$v$ in the set of nodes $V$}
    \State sample $v\in V$ with equal probability $\frac{1}{|V|}$
    \State $p_\theta(X_v=-1\mid x_{j\ne v}) =  \frac{\exp ( -\beta(-\tau_s -\sum_{j\ne s} \sigma_{sj} x_j + \alpha^2 x_s^2)) }
					{ 1 + 2\cosh(-\beta(\tau_i + \sum_{j\ne s} \sigma_{sj} x_j))\exp(-\beta\alpha^2) }$\\
    \State $p_\theta(X_v=0\mid x_{j\ne v}) =  \frac{1 }
					{ 1 + 2\cosh(-\beta(\tau_i + \sum_{j\ne s} \sigma_{sj} x_j))\exp(-\beta\alpha^2) }$\\
    \State  obtain $u$ uniformly from $[0,1]$
    \If{$p_\theta(X_v=-1\mid x_j\ne v)\leq u$} $X_v \leftarrow -1$
    \ElsIf{$p_\theta(X_v=-1\mid x_{j\ne v})\ge u$ and $p_\theta(X_v=0\mid x_{j\ne v})\leq u$\\ \qquad\qquad}{ $X_v \leftarrow 0$}
    \ElsIf{$p_\theta(X_v=0\mid x_{j\ne v})\ge u$} $X_v \leftarrow 1$
     \EndIf
  \EndFor
  \State $i \leftarrow i+1$
\EndFor
\end{algorithmic}
\end{algorithm}
%

\section{Details about the numerical illustration}\label{app:details-numerical-illustrations}\noindent
For the simulation we generated a random matrix with $p_e=0.3$, so that we obtain on average a network with $13.5$ edges for the 10 node network, $57$ edges for the 20 node network, and approximately $131$ edges for the $30$ node network. We used the Gibbs sampler from Appendix \ref{app:gibbs-sampler} to generate data for $n=50$ to $260$ with intervals of $30$. The parameters for the Gibbs sampler were $\beta=2$, $\alpha^2=1.2$, $\tau=1.2$ and $\sigma_{st}=1$ for all $s,t\in V$. For each run in any sample size an Erd\"{o}s-Renyi network was generated with $p_e=0.3$. The Gibbs sampler was run for 2000 iterations to obtain a reasonable approximation to the Blume-Capel distribution.  

To estimate the parameters we implemented the pseudo-likelihood algorithm in \texttt{R} using the optimisation function \texttt{optim}. The default optimisation algorithm, also used for the simulations, was the option \texttt{method="BFGS"}. This refers to a quasi-Newton optimisation algorithm that uses the gradient to determine the direction and step size to evaluate the pseudo-likelihood surface. For estimation we set $\lambda = \sqrt{\log(p)/n}$ for the network parameters only (it is possible to regularise the thresholds and zero-parameters also). The R-code to apply this to data with three states is available at XXX. 

Standard errors were obtained with the shrinkage estimator \citep{Schafer2005} with shrinkage parameter $\rho=1/n^{5/4}$ (see Appendix \ref{app:dLshrink}). This seems to work well for good coverage of confidence intervals and reasonably sized confidence intervals. \citet{Schmid:2023} suggest to obtain the variance of the first order derivatives $\mathbb{E}[\nabla \ell_\theta(x)\nabla \ell_\theta(x)^\top]$ by simulating from the parameters obtained by pseudo-likelihood estimation and then use the variation from these simulations. Here, we use the diagonal elements of the first-order derivatives, which works well as shown in the simulations.

\section{Additional analyses application to voting data}\label{app:application-additional}\noindent
In addition to analysing the entire dataset with $n=10000$ observations, we also performed subsampling analyses \citep{Romano:2012}. In each of the 50 subsample analyses, we randomly selected $0.70$ of the entire sample, so that each subsample contains $n_s=0.70n=7000$ observations. We repeated the analysis with the exact same settings, with the lasso penalty adjusted to the sample size, so that $\lambda=\sqrt{\smash[b]{\log(m)/n_s}}=0.052$ and in the standard error estimates using the shrinkage estimate we adjusted the sample size for $1/n_s^{5/4}$. The result of the subsample anaylsis is shown in Figure \ref{fig:stem-subsample}(a) for the network and confidence intervals for a randomly chosen subsample in Figure \ref{fig:stem-subsample}(b).
\begin{figure}\centering
\begin{tabular}{c @{\hspace{0em}} c }
\raisebox{1em}{\pgfimage[width=.48\textwidth]{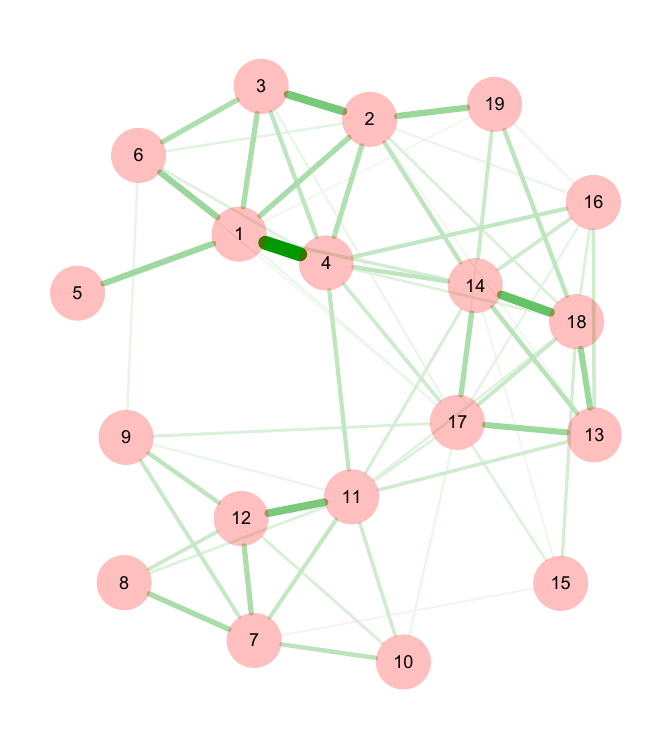}}		&\pgfimage[width=.52\textwidth]{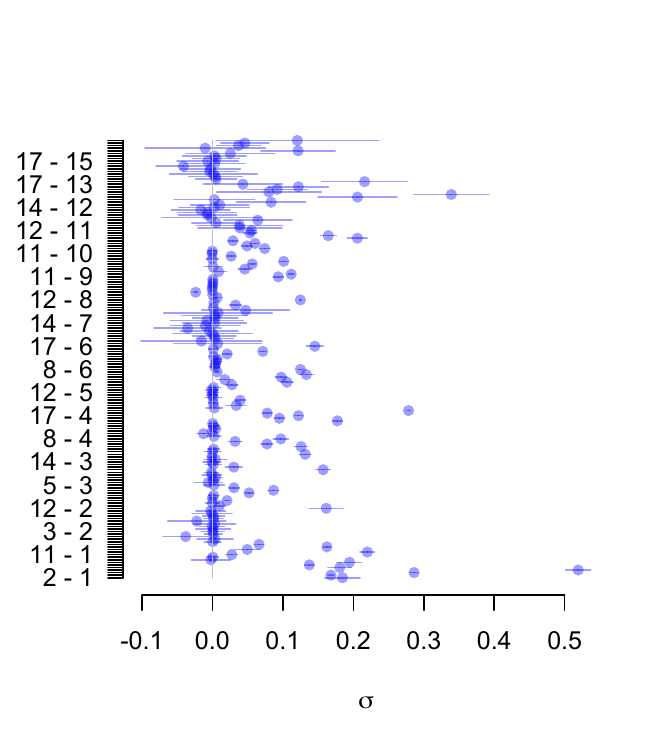}\\
	(a)		&(b)
\end{tabular}
\caption{In (a) is the network estimate, averaged across 50 subsamples, using the BC model with the lasso algorithm, thresholded using $\sqrt{\smash[b]{\log(p)/n_s}}$ with $n_s=0.70n$; edges below this threshold are set to 0. In (b) are the confidence intervals of the edge parameters $\sigma$ based on one of the subsamples, using the shrinkage estimate described in Appendix \ref{app:dLshrink}.  }
\label{fig:stem-subsample}
\end{figure}
%

\end{document}